\newcommand{\id}{\mathds{1}}
\newcommand{\cS}{\mathcal{S}}
\newcommand{\cW}{\mathcal{W}}
\newcommand{\be}{\begin{equation}}
\newcommand{\ee}{\end{equation}}
\newcommand{\chsh}{\mathcal{S}}
\newcommand{\BobL}{$\B^{\rm L}$}
\newcommand{\BobS}{$\B^{\rm S}$}
\newcommand{\A}{{\rm A}}
\newcommand{\B}{{\rm B}}
\renewcommand{\S}{\mathcal{S}}
\renewcommand{\sb}{\mathrm{s}}
\newcommand{\C}{\mathcal{W}}
\newcommand{\T}{\mathcal{T}}
\newcommand{\Cs}{\mathsf{W}}
\newcommand{\Ts}{\mathsf{T}}
\newcommand{\cs}{\mathsf{w}}
\newcommand{\ts}{\mathsf{t}}
\newcommand{\TTn}{\mathcal{T}_{n}}
\newcommand{\WWn}{\mathcal{W}_{n}}
\newtheorem*{result}{Result}
\newtheorem*{corollary}{Corollary}
\newtheorem{lemma}{Lemma}
\begin{document}

\title{Certification of quantum correlations and DIQKD \\ at arbitrary distances through routed Bell tests}

\author{Pavel Sekatski}
\thanks{These  authors contributed equally}
\affiliation{Department of Applied Physics, University of Geneva, Switzerland}

\author{Jef Pauwels}
\thanks{These  authors contributed equally}
\affiliation{Department of Applied Physics, University of Geneva, Switzerland}
\affiliation{Constructor Institute of Technology (CIT), Geneva, Switzerland}
\affiliation{Constructor University, 28759 Bremen, Germany}
\affiliation{Laboratoire d'Information Quantique, Universit\'e libre de Bruxelles (ULB), Belgium}

\author{Edwin Peter Lobo}
\thanks{These  authors contributed equally}
\affiliation{Laboratoire d'Information Quantique, Universit\'e libre de Bruxelles (ULB), Belgium}

\author{Stefano Pironio}
\affiliation{Laboratoire d'Information Quantique, Universit\'e libre de Bruxelles (ULB), Belgium}

\author{Nicolas Brunner}
\affiliation{Department of Applied Physics, University of Geneva, Switzerland}


\begin{abstract}
Transmission loss represents a major obstacle to the device-independent certification of quantum correlations over long distances, limiting applications such as device-independent quantum key distribution (DIQKD). In this work, we investigate the recently proposed concept of routed Bell experiments, in which a particle sent to one side can be measured either near or far from the source. We prove that routed Bell tests involving only entangled qubits can certify quantum correlations even in the presence of arbitrary loss on the channel to the distant device. This is achieved by adapting concepts from self-testing and quantum steering to the routed Bell test framework. Finally, as a natural extension of our approach, we outline a DIQKD protocol that, in principle, is secure over arbitrary distances.
\end{abstract}

\maketitle


Device-independent (DI) protocols aim to certify quantum properties—such as entanglement or security—without making assumptions about the internal functioning of the measurement devices \cite{Acin2007}. At the core of any DI claim lies a fundamental classical-quantum gap that can be established solely from the observed correlations. In addition, such claims rely on a specific causal structure—typically that of a Bell experiment—in which spatially separated parties perform local measurements on a shared quantum state. Within this framework, the classical-quantum gap manifests as a violation of a Bell inequality and is synonymous with the notion of quantum nonlocality \cite{Brunner2014}.

In practice, harnessing the power of quantum nonlocality comes with significant experimental challenges. 
In particular, local detection efficiencies must exceed a specific threshold; otherwise, a local model might exploit the so-called “detection loophole” to explain the observed correlations \cite{pearle1970,Clauser1974}. 

The detection efficiency required to violate a Bell inequality varies considerably with the experimental setup. While Bell tests involving many measurement settings and high-dimensional entangled states can, in principle, tolerate very low efficiencies~\cite{Massar2002,Massar2003,Vertesi2010,Vertesi2015,Marton2023,Miklin2022,Xu2023}, practical implementations typically rely on low-dimensional systems such as qubits. In these scenarios, the efficiency thresholds remain relatively high—typically above 60\%—even when loss is the only imperfection~\cite{Brunner2014}. For applications such as device-independent quantum key distribution (DIQKD), the requirements are even more stringent, with standard protocols becoming insecure below 50\% efficiency~\cite{Acin2016,Masini2024}.

Although recent experiments have achieved loophole-free Bell violations~\cite{Hensen2015,Zeilinger2015} and proof-of-principle demonstrations of DIQKD~\cite{Nadlinger2022,Zhang2022}, distributing nonlocal correlations over long distances remains a major challenge. Transmission losses—which effectively reduce the overall detection efficiency—increase exponentially with distance, imposing severe limitations even when local detectors operate at unit efficiency. This has motivated the exploration of heralded entanglement generation~\cite{Simon2003,Gisin2010,Kolodynski2020,Steffinlongo2024}, though such approaches remain out of reach with current technology.

\begin{figure}[t!]
    \centering
    \includegraphics[width=0.99\columnwidth]{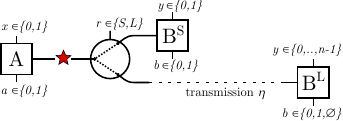}
    \caption{\emph{The routed Bell scenario}. A source distributes entanglement to Alice and Bob. A switch positioned after the source directs Bob's system either to a nearby measurement device $\B^{\rm S}$, or to a distant device $\B^{\rm L}$, which experiences loss with transmission $\eta$. We show that the observation of a Bell violation between $\A$ and $\B^{\rm S}$  allows the certification of genuine quantum correlations along the long path, for arbitrary transmission $\eta>0$. }
    \label{fig:1}
\end{figure}

Recently, an alternative approach to mitigating transmission loss in Bell experiments-- coined ``routed Bell tests''--was proposed~\cite{Chaturvedi2024,Lobo2024}. In this scenario, illustrated in Fig.~\ref{fig:1}, a standard bipartite Bell experiment is modified so that one of the particles (e.g., Bob's) can be measured at two different locations: either near the source or at a distant site. Interestingly, these tests can tolerate more loss than standard Bell tests \cite{Lobo2024}, and can increase robustness to loss for DIQKD \cite{Deloison2024,Tan2024}. The setup is closely related to DI protocols with a ``local Bell test'' \cite{Lim2013}. 

In the present paper, we prove that routed Bell tests using entangled qubits can tolerate arbitrarily high transmission loss, enabling the distribution of genuine quantum correlations over an arbitrarily large distance. Our approach is analytical, drawing concepts from self-testing \cite{MayersYao, Bowles2018} and quantum steering \cite{wiseman2007,Uola2020,Skrzypczyk2015}. In particular, we derive a tight family of steering inequalities for qubits, which may be of independent interest. As an extension of our analysis, we outline a DIQKD protocol for the routed Bell scenario, which can in principle achieve a non-zero key rate at arbitrary distances.


\textit{The routed Bell scenario.---}
In a routed Bell test~\cite{Chaturvedi2024,Lobo2024}, illustrated in Fig.~\ref{fig:1}, a source prepares a bipartite quantum state $\rho_{AB}$, sending subsystem $A$ to Alice --who is positioned close to the source -- and subsystem $B$ toward Bob. Upon receiving an input $x$, Alice performs a measurement described by a set of positive operator-valued measures (POVMs) $\{\text{A}_{a|x}\}_x$ and obtains an outcome $a$. Bob's particle is routed by a switch to one of two measurement stations, depending on an input $r \in \{S, L\}$. If $r = {\rm S}$ (for “short path” or SP), the system is measured at a nearby device $\B^{\rm S}$; if $r = {\rm L}$ (for “long path” or LP), it is measured at a distant device $\B^{\rm L}$. The measurements at SP and LP are described by POVMs $\{\B^{\rm S}_{b|y}\}_y$ and $\{\B^{\rm L}_{b|y}\}_y$, respectively, where $y$ is the input and $b$ the output. These measurements may differ between the two locations. The correlations observed in the experiment are given by
\begin{equation} \label{eq:born}
    p(a,b|x,y,r) = 
        \tr (\rho_{AB}\A_{a|x} \otimes \B^r_{b|y}) \,.
\end{equation}
In the DI spirit, no assumptions are made about the internal workings of the source, the measurements, or the switch—all are treated as untrusted. The only physical assumptions are causality constraints enforcing the tensor product structure in Eq.~\eqref{eq:born}, and the independence of the devices. For example, Alice's measurement process is assumed to be independent of the inputs and outputs on Bob's side (which includes the switch and both measurement stations), and vice versa.

In the routed Bell scenario, the fact that the measurement devices $\A$ and $\B^{\rm S}$ are located close to the source reduces losses and noise, making it easier to observe a Bell violation in the SP. Such a violation allows for the partial certification of key properties of the source and Alice's measurement device. The central goal is then to certify genuine quantum correlations in the LP between $\A$ and $\B^{\rm L}$, possibly leveraging the information provided by the SP violation.

Unlike standard Bell tests and traditional device-independent (DI) protocols, our aim is not to exclude a classical explanation for the entire experiment—this is already achieved via the Bell violation in the SP. Instead, we seek to rule out a broader class of hybrid quantum-classical models, in which the devices near the source may operate quantumly, while only classical information reaches \BobL . Correlations of this kind are referred to as short-range quantum (SRQ), following Ref.~\cite{Lobo2024}.

SRQ correlations can thus be reproduced by strategies where the quantum system traveling along the LP is measured before reaching \BobL. More precisely, the process is modeled by a POVM $\{{{\rm E}_\lambda\}}_\lambda$ acting on Bob's quantum system, producing a classical variable $\lambda$. The distant device \BobL\ then outputs $b$ based on the input $y$ and $\lambda$, via a purely classical post-processing function $p(b|y,\lambda)$. SRQ correlations therefore admit the following decomposition:

\begin{equation}\label{eq: def SRQ}
    p(a,b|x,y,r) 
    \!\stackrel{\rm SRQ}{=}
    \!\begin{cases}
        \tr (\rho_{AB} \A_{a|x} \otimes \B^{\rm S}_{b|y} ) &\!\! r\! =\! {\rm S},  \\
        \sum_\lambda p(b|y,\lambda) \tr(\rho_{AB} \A_{a|x} \otimes {\rm E}_\lambda ) &\!\!r \!= \!{\rm L}.
    \end{cases}
\end{equation}

Correlations that cannot be expressed in the above form are termed long-range quantum (LRQ). These require the distribution of entanglement along the LP and are essential for any application that relies on certifying quantum properties between distant parties~\cite{Lobo2024,Deloison2024}. Similar to nonlocal correlations in the standard Bell scenario, LRQ correlations can be detected by the violation of a ``routed'' Bell inequality \cite{Lobo2024}. We now introduce a family of inequalities that can be violated by a quantum model even in the presence of arbitrary losses along the LP.

\textit{A family of routed Bell inequalities.---}
We consider a routed Bell scenario where \A\ and \BobS\, have two binary-outcome measurements:  $a,x,b,y \in \{0,1\}$. The device \BobL\  has ternary outcomes
$b \in \{0,1,\varnothing\}$, where we will later identify the outcome $\varnothing$ with a no-click. The input of \BobL\ takes $n$ values $y\in\{0,\ldots,n-1\}$, that we also label for convenience as the `angles' $\theta_y = y \frac{\pi}{n}$. 

Our routed Bell inequality is based on the following three interconnected observable quantities: the SP CHSH expectation $\S$, 
\begin{equation}
    \label{eq: CHSH}
    \S = \sum_{x,y=0,1} (-1)^{x \cdot  y}\, \langle A_x B^{\rm S}_y\rangle \,,
\end{equation}
the LP asymmetric Bell expectation
\begin{equation}
\label{eq: C}
\C_n = \frac{1}{n}\sum_{y=0}^{n-1}\Big( \cos{\theta_y} \, \langle A_0 B_y^{\rm L}\rangle  + \sin\theta_y \, \langle A_1 B_y^{\rm L}\rangle \Big),
\end{equation}
and the average click probability of \BobL's detector
\begin{equation}
    \label{eq: T}
    \T_n = \frac{1}{n}\sum_{y=0}^{n-1} \sum_{b=0,1} p(b|y,r={\rm L}) \,.
 \end{equation}
 In the above expressions, we introduced the observables $A_x := \A_{0|x}-\A_{1|x}$ and $B_y^r := \B^r_{0|y}-\B^r_{1|y}$, defining the SP and LP correlators $\langle A_x B^r_y\rangle = \sum_{a,b = 0,1} (-1)^{a+b}\, p(a,b|x,y,r)$.  

As we will show, local correlations (in a standard Bell scenario where the SP test is ignored) satisfy the following non-linear Bell inequality $\C_n \leq \sqrt{2} \sin(\frac{\pi}{2}\T_n )/n\sin(\frac{\pi}{2 n})$.

In a routed Bell scenario, the SRQ bound on the LP quantity $\C_n$ is a function of the SP CHSH violation $\S$, and is reduced relative to the local bound by a factor that depends on $\S$. This is formalized in the following result.
\begin{result}[Routed Bell inequality] 
For $\S\geq 2$ and $\frac{\pi}{2}\T_n/\sin\left(\frac{\pi}{2}\T_n\right) \geq   \sqrt{ 2} \, \gamma(\S)$, 
SRQ correlations satisfy the routed (non-linear) Bell inequality
   \be \label{eq:main}
   \C_n \leq   \frac{\sqrt{2}\, \sin\left(\frac{\pi}{2}\T_n\right)}{n\sin\left(\frac{\pi}{2 n}\right)} \gamma(\S)\,,
    \ee
    where $\gamma(\S) = \left(\S+\sqrt{8-\S^2}\right)/4$.
\end{result}
The SP factor $\gamma(\S)$ is equal to $1$ when $\S$ saturates the local bound $\S=2$ and decreases to $\sqrt{2}/2\simeq 0.71$ when $\S=2\sqrt{2}$.

We prove the above result in detail in Appendix~\ref{app: proof main} and sketch it below. In the Appendix, we also generalize Eq.~(\ref{eq:main}) to the whole range of values $\S$ and $\T_n$.  We also consider a second scenario where \BobL\ has a continuous input $y \equiv \theta \in [0,\pi)$, which can be thought of as the limit $n\to \infty$ of the discrete input case. We show that the above bound is tight in that limit.

Alternatively to Eq.~(\ref{eq:main}), we also prove that SRQ correlations satisfy a family of linear inequalities of the form 
$\C_n + \mu \T_n +\nu \S \leq \xi$, useful for experimental tests. Such linear inequalities can be readily combined with standard tools for finite-statistics analysis of the experimental data, enabling the computation of p-values and confidence intervals on the average violation.

\textit{Sketch of the proof.---} The key idea underlying the proof is that the SP CHSH violation provides information about the measurements performed by Alice. For instance, when $\S=2\sqrt{2}$, Alice's measurements can be self-tested as two-qubit anti-commuting Pauli measurements. Since \A\ is characterized, the LP test can then be interpreted as a steering test in which \BobL\ is untrusted. Our proof therefore combines the partial self-testing of Alice’s measurements from the SP with a steering inequality applied in the LP, treating \BobL\ as the untrusted party.

We begin by discussing the steering part of the argument. For this, we assume that Alice performs two qubit measurements parameterized by an angle $\alpha$ in the $X-Z$ plane. That is, we assume that Alice's observables 
are of the specific form
\begin{equation}
   A_x\equiv  A_x^\alpha = \cos\alpha \frac{Z+X}{\sqrt{2}} +(-1)^x \sin\alpha \frac{Z-X}{\sqrt{2}} \,.
\end{equation}
The correlations in the LP can then be written as $p(ab|xy,L) = \tr_{B}(\sigma_{b|y}A_x^\alpha)$, where $\sigma_{b|y}$ is the assemblage created by the distant Bob through his measurement. If the correlations are SRQ, then the assemblage can be written as $\sigma_{b|y} = \sum_\lambda p(b|y,\lambda) \rho_\lambda$, where $\rho_\lambda = \tr_B(\rho_{AB} \id \otimes {\rm E}_\lambda)$, i.e., the assemblage admits, in the language of steering, a LHS model.
Thus, any steering inequality involving the trusted observables $A_x^\alpha$ is satisfied by the LP correlations for SRQ models.

We derive the following steering inequality in App.~\ref{app:steeringlemma}
\begin{equation}\label{eq:AJordan}
    \C_n^\alpha\leq \frac{\sqrt{2} \sin\left(\frac{\pi}{2}\T_n\right)}{n\sin\left(\frac{\pi}{2 n}\right)} \gamma\left(\sb_\alpha \right)\,,
\end{equation}
where we introduced $\sb_\alpha = 2(\cos\alpha + \sin\alpha)$.  That is, we will show that if the measurements of Alice are of the form $A_x^\alpha$, then the LP quantities $\WWn^\alpha$ and $\TTn$ satisfy Eq.~\eqref{eq:AJordan} whenever $\sigma_{b|y} = \tr_B(\rho_{AB}\id \otimes {\rm B}^{\rm L}_{b|y})$ admits an LHS model.
For complementary measurements $(\alpha=\nicefrac{\pi}{4})$ we derive a tight characterization of the LHS correlation set in terms of the quantities  $(\WWn^{\nicefrac{\pi}{4}},\TTn)$ in App.~~\ref{app:steeringlemma}. These results are of independent interest and complements the steering inequalities of Ref.~\cite{Skrzypczyk2015}.

We now relax the assumption on $A_x$ and characterize the measurements of Alice based only on the SP CHSH violation.  Since Alice performs two binary-outcome measurements, Jordan's lemma \cite{jordan1875essai,scarani2012device} implies that her observables $A_x$ and the shared state $\rho_{AB}$ can be decomposed into qubit blocks as
\begin{equation}\label{eq:Jordan_decomposition}
    A_x =\bigoplus_i A_x^{\alpha_i}\,,\quad 
    \rho_{AB} = \bigoplus_i \mu_i \ketbra{\psi_{i}}{\psi_{i}} \,,
\end{equation}
where $\mu_i = \tr [\rho_{AB}\, \Pi^i_{A}\otimes \id_B]$ is the probability to find the state in the corresponding qubit block  (see App.~\ref{app:CHSH}). 
Substituting Eq.~\eqref{eq:Jordan_decomposition} into Eq.~\eqref{eq: CHSH}, we get 
\begin{align}
            \chsh = 2 \sum_i \mu_i \Big(&\cos\alpha_i \bra{\psi_i}\frac{Z+X}{\sqrt{2}} \otimes B^{\rm S}_0 \ket{\psi_i} + \\   &\sin\alpha_i \bra{\psi_i}\frac{Z-X}{\sqrt{2}} \otimes  B^{\rm S}_1 \ket{\psi_i}\Big)
         \leq\sum_i \mu_i \sb_{\alpha_i} \nonumber
\end{align}
where we used the fact that $\|A \otimes B\| \le 1$ for any pair of dichotomic observables $A$ and $B$. Therefore, the distribution of the ``measurement angle'' $\alpha_i$  is constrained by the observed CHSH score 
\be \label{eq:SPtest}
\sum_i \mu_i\sb_{\alpha_i} \geq \chsh\,.
\ee
Further, from the steering inequality Eq.~(\ref{eq:AJordan}), we have
\begin{align}
\WWn = \sum_i \mu_i \WWn^{\alpha_i} \leq \frac{\sqrt{2}}{n\sin\left(\frac{\pi}{2 n}\right)}\, \sum_i \mu_i  \,\sin\left(\frac{\pi}{2}\TTn^i\right)\, \gamma(\sb_{\alpha_i}) \,.
\end{align}
As we show in the App.~\ref{app:Result1}, the function $F(\sb_\alpha, \TTn) := \sin\left(\frac{\pi}{2}\TTn\right) \gamma(\sb_{\alpha})$ is equal to its concave envelope whenever $\frac{\pi}{2}\T_n/\sin\left(\frac{\pi}{2}\T_n\right) \geq  \sqrt{2} \gamma(\sb_\alpha)$. Therefore,
\begin{equation}
    \begin{aligned}
    n\sin\left(\frac{\pi}{2 n}\right) \WWn &\leq  \sqrt{2} \, \sum_i \mu_i \,F(\sb_{\alpha_i},\TTn^i)\\
    &\leq  \sqrt{2} \, F\left(\sum_i \mu_i \,\sb_{\alpha_i}, \sum_i \mu_i \, \TTn^i\right)\,\\
    &= \sqrt{2}  \, \sin\left(\frac{\pi}{2}\TTn\right) \gamma \qty(\sum_i \mu_i \sb_{\alpha_i})\,.
\end{aligned}
\end{equation}
Since the function $\gamma(x)$ is monotonically decreasing, we can bound $\gamma\left(\sum_i \mu_i \sb_{\alpha_i}\right)\leq \gamma(\S)$ from Eq.~\eqref{eq:SPtest}, which yields Eq.~(\ref{eq:main}).

We emphasize that the logic just presented provides a general method for promoting any family of steering inequalities---valid for Alice’s trusted measurements \( A_x^\alpha \) across all angles \( \alpha \) (such as Eq.~\eqref{eq:AJordan})---into a routed Bell test. The formal mapping is provided in Lemma~\ref{cor: 4} in the appendix.\\

We now show that a simple two-qubit strategy can yield a violation of our routed Bell inequality for an arbitrary transmission $\eta$ in the  LP. 
We consider a quantum implementation of the routed Bell experiment where the source prepares the two-qubit maximally entangled state $\ket{\phi^+}_{AB} = (\ket{00} + \ket{11})/\sqrt{2}$, and the devices $\A$ and $\B^{\rm S}$ perform the Pauli measurements $\{Z,X\}$ and $\{\frac{Z\pm X}{\sqrt2}\}$, respectively, while the device $\B^{\rm L}$ performs $n$ measurements along the bases $\{Z\cos\theta_y + X \sin\theta_y \}_{y \in \{0,\ldots,n-1\}}$. 
Non-detection events $\varnothing$, corresponding to the situation when a detector fails to click, are treated as a separate outcome by $\B^{\rm L}$ and binned with one of the other two outcomes  by  $\A$ and $\B^{\rm S}$.

The correlations quantities can always be expressed as
\be
\S=2\sqrt{2}(1-\epsilon), \quad \T_n={\eta}, \quad \C_n=\eta(1-\delta)
\ee
where \( \eta \) represents the total {detection efficiency of Bob's distant device $\B^{\rm L}$ and \( \epsilon \), \( \delta \) capture any other imperfections. Substituting these values into the routed Bell inequality \eqref{eq:main} gives us the following condition satisfied by all SRQ correlations
\begin{equation}\label{eq:condition_explicit}
    \begin{multlined}
     \frac{\eta}{\sin(\frac{\pi}{2}\eta)} \stackrel{\rm SRQ}{\leq} \frac{1-\epsilon + \sqrt{\epsilon(2-\epsilon)}}{1-\delta}\frac{1}{n \sin(\frac{\pi}{2 n})}.
    \end{multlined}
\end{equation}
For small imperfections and large number of settings, i.e., for $\epsilon$, $\delta$, and $1/n$ small, we can expand the right-hand side of the above inequality to obtain the following condition for demonstrating LRQ correlations with the qubit model:
\begin{equation}\label{eq:condition_small}
    \eta>\frac{1}{n}\sqrt{1+\frac{24}{\pi^2}n^2 \left(\delta +\sqrt{2\epsilon }\right)}\,.
\end{equation}
For ideal devices, only limited by the finite transmission $\eta$ in the LP, i.e., for $\epsilon=\delta=0$, LRQ correlations can be witnessed as long as $\eta>\frac{1}{n}$, that is, for arbitrary low detection efficiencies provided that $\B^{\rm L}$ has enough measurement inputs.

 This shows that the routed Bell scenario offers a dramatic advantage in terms of robustness to loss compared to standard Bell tests. For comparison, we note that in a standard Bell test, the correlations $p(a,b|x,y,{\rm L})$ observed by Alice and $\B^{\rm L}$ in our example are local for $\eta\leq \frac{2}{\pi}\simeq 0.64$ (see App.~\ref{app: LHVmodels} for a detailed discussion). Additionally, it is worth noting that the required number of measurement settings $n >1/\eta$ is known to be minimal~\cite{Massar2003,Lobo2024}.

Inequality~\eqref{eq:condition_small} remains violated even when the local devices are sufficiently good, though not perfect, i.e., $\epsilon,\delta\ll 1$. To be concrete, consider a specific noise model, where all local detectors have limited efficiency $\eta_d$, the entanglement source has visibility $\nu$ (white noise), the transmission efficiency in the SP is $\eta_{SP}$, and the transmission efficiency in the LP is $\eta_{LP}$. For this model, straightforward algebra gives the values
\begin{equation}
     \begin{split}
  \eta &=\eta_{LP}\eta_d,\quad \delta = 1-\eta_d\nu\, \quad \\
    \epsilon &= 1-\eta_d^2\eta_{SP}\, \nu -\frac{(1-\eta_d)^2(1-\eta_{SP})}{\sqrt 2}\,.\nonumber
    \end{split}
\end{equation}
Plugging these values into Eq.~\eqref{eq:condition_small}, it is easy to see that for any transmission efficiency $\eta_{LP}$ in the LP, LRQ correlations can be witnessed provided the imperfections of the local devices are sufficiently small. For instance, taking $\eta_d=\nu=\eta_{SP}=1-\zeta$, we find that the condition \eqref{eq:condition_small} is equivalent to
\begin{equation}
    \eta_{LP}>\frac{1}{n}(1+\frac{24}{\pi^2}n^2 \sqrt{2\zeta})\,.
\end{equation}
to smallest order in $\zeta$. The LP transmission efficiency $\eta_{LP}$ can thus be arbitrarily small provided $n$ is large enough and the local imperfections are of order $\zeta< 1/n^4$.

\textit{Possibility of DIQKD with arbitrary loss.---} 
LRQ correlations are necessary for the security of DIQKD in routed setups \cite{Deloison2024}. Having demonstrated that LRQ correlations can be certified over arbitrary distances, a natural question is whether such correlations also enable DIQKD. This is far from obvious since the requirements on the detection efficiency for DIQKD are typically more stringent than for certifying the distribution of entanglement \cite{Acin2016}. Furthemore, the correlation scenario we introduced involves two measurement bases on Alice’s side. To maximize correlations for key generation, it would be natural to pair these two measurements with two corresponding measurements among Bob's $n$ possible settings. However, it is known that no DIQKD protocol based solely on two measurement settings for key generation can achieve security when the detection efficiency falls below $1/3$.

Nevertheless, our correlation scenario can be generalized by introducing multiple measurement settings for Alice and the SP device \BobS. Specifically, each party can implement $n$ measurement settings, employing the chained Bell inequality~\cite{Braunstein1989} rather than the CHSH inequality for self-testing. The underlying logic is analogous to the two-measurement case we have studied: the SP test using the chained Bell inequality certifies the state and Alice's measurements, which can then be leveraged for certifying LRQ correlations involving Bob's distant device.

Specifically, consider the setup of Fig.~\ref{fig:1}, but now each party (Alice, \BobS\, and \BobL) has $n$ possible binary measurements, described by the observables
\begin{equation}\label{MeasChained}
A_{x}=\cos(\frac{ x \pi}{n})Z+\sin(\frac{ x \pi}{n}) X
\end{equation}
with $x=0,\dots,n-1$ for Alice, and similarly for \BobS\, and \BobL.
As before, the source produces a maximally entangled two-qubit state $\ket{\Phi^+}$. Under these conditions, the SP correlations achieve the maximal violation of the $n$-input chained Bell inequality~\cite{Braunstein1989}. Using known self-testing results~\cite{Supic_2016}, we certify that the state is $\ket{\Phi^+}$ and that Alice's observables are of the form \eqref{MeasChained}.  Consequently, conditioned on Alice’s measurement outcome $a$ for setting $x$, the state transmitted to \BobL\, is the pure qubit state
\begin{equation}\label{eq:boxalice}
\rho_{a|x}=\frac{1}{2}\Bigg(\id +(-1)^a\Big(\cos\left(\frac{x \pi}{n}\right)Z+\sin\left(\frac{x \pi}{n}\right)X\Big)\Bigg).
\end{equation}

Hence, due to the test performed in the SP, the combination of the source and Alice's measurement device can be viewed as preparing states of the form \eqref{eq:boxalice}, which are then sent along the LP to \BobL. The LP correlations now correspond to a prepare-and-measure scenario in which \BobL\, performs qubit measurements of the form $B_{y}^{\rm L}=\cos(y \pi/n)Z+\sin(y \pi/n)X$. This recovers precisely the set-up of the receiver-device-independent (RDI) QKD protocol introduced in Ref.~\cite{Ioannou2022receiverdevice,Ioannou_2022}. The parties can then securely extract a secret key based on Bob’s input, and this is possible for any $\eta>1/n$. We have thus introduced a proof-of-principle routed DIQKD protocol capable of producing a non-zero key rate at arbitrarily low transmission, provided that the number of inputs $n$ is chosen sufficiently large.

Security is established here only in the ideal case, assuming perfect violation of the chained Bell inequality and exact self-testing. The first step towards studying the robustness of such a DIQKD protocol under realistic conditions and imperfections would be to analyze the robustness of the underlying chained Bell correlations, following the approach we presented here for the CHSH-based SP test.

\textit{Conclusion.---} We have shown that quantum correlations can be certified at arbitrary transmission efficiencies in a routed Bell scenario. This can be achieved with a two-qubit source and without complicated heralding setups . A key insight underlying this result is that the short-path test functions as a (partial) self-test of Alice's device, allowing us to derive an explicit routed Bell inequality which imposes strong constraints on any potential classical simulation of the long-path correlations. Additionally, we presented a proof-of-principle DIQKD protocol tailored to the routed Bell test scenario which can tolerate arbitrary loss. These results demonstrate the clear potential of the routed Bell scenario for long-distance quantum correlations and its applications.

In the future, it would be interesting to characterize the admissible trade-off between noise and loss in routed Bell tests, and develop efficient protocols for demonstrating nonlocal quantum correlations and their applications. In particular, we believe that considering the full statistics of the experiments on the long path could improve the noise versus loss threshold.

\begin{acknowledgements} 

We thank Sadra Boreiri, Nicolas Gisin, Renato Renner, Mirjam Weilenmann and Ramona Wolf for exciting discussions at an early stage of this project. P.S, J.P and N.B acknowledge financial support from the Swiss State Secretariat for Education, Research and Innovation (SERI) under contract number UeM019-3, and the Swiss National Science Foundations (NCCR-SwissMAP). S.P. acknowledges funding from the VERIqTAS project within the QuantERA II Programme that has received funding from the European Union’s Horizon 2020 research and innovation program under Grant Agreement No. 101017733 and the F.R.S-FNRS Pint-Multi program under
Grant Agreement No. R.8014.21, from the European Union’s Horizon Europe research and innovation program under the project “Quantum Security Networks Partnership” (QSNP, Grant Agreement No. 101114043), from the F.R.S-FNRS through the PDR T.0171.22, from the FWO
and F.R.S.-FNRS under the Excellence of Science (EOS) program Project No. 40007526, from the FWO through the BeQuNet SBO project S008323N, from the Belgian Federal Science Policy through the contract RT/22/BE-QCI and the EU “BE-QCI” program. S.P. is a Research
Director of the Fonds de la Recherche Scientifique –FNRS. E.P.L. acknowledges support from the Fonds de la Recherche Scientifique – FNRS through a FRIA grant
\end{acknowledgements}

\bibliography{refs.bib}

\appendix

\onecolumngrid
\section{Local hidden variable models for lossy singlet correlations}
\label{app: LHVmodels}

Here, we review some results on the critical detection efficiency for standard Bell nonlocality for the specific scenarios considered in this work, i.e., focusing on cases with asymmetric losses, where Alice's device operates at unit efficiency and Bob's has finite efficiency, $\eta$. This corresponds to a situation where transmission is the only source of loss, and Alice is close to the source while Bob is far away.

As the local dimension $d$ of the entangled state increases, nonlocality can be observed at arbitrarily low transmission rates, specifically $\eta \geq 1/d$ 
 using a Bell inequality with $d$ settings \cite{Vertesi2010}. However, this result is largely impractical due to the increase in settings and Hilbert space dimension as $\eta$ decreases. Limiting the dimension or number of measurements thus provides more realistic insights.

General bounds on the detection efficiency can also be established based solely on the number of measurement settings. Specifically, when Bob's settings $n_B$ satisfy  $n_B \leq 1/\eta$, the correlations become local and can be reproduced by a local hidden-variable (LHV) model via data rejection \cite{pearle1970,Clauser1974}. In the scenarios considered in the main text—where Alice has two settings with binary outcomes and Bob has any number of settings with ternary outcomes (the third outcome corresponding to a no-click event)—all facets of the local polytope correspond to relabelings of the CHSH inequality \cite{Pironio2014}. Therefore, demonstrating nonlocality in such cases reduces to violating a CHSH inequality. Consequently, all correlations in these scenarios become local when  $\eta \leq \eta_{CHSH} = 1/2$ \cite{Eberhard1993,Cabello2007}.

Another interesting case is when the number of settings is unrestricted but we fix the state to be a maximally entangled two-qubit state. For projective measurements, an LHV model can simulate correlations when $\eta \leq \eta_{PVM} = 1/2$ \cite{gisin1999local}. Below, we make two small extensions to this result: (1) When Alice's measurements are restricted to a plane, an LHV model can simulate correlations up to $\eta_{RPVM} = 2/\pi$; (2) for arbitrary POVMs by Bob, critical detection efficiency further decreases to $\eta_{POVM} = 1/4$ and $\eta_{RPOVM} = 1/\pi$, respectively. These results are summarized in the table below.

\begin{table}[H]
\centering
\begin{tabular}{|r || c |c |}
\hline
&Bob measures all PVMs& Bob measures all POVMs\\
\hline
\hline
Alice measures all PVMs &$\eta_*^{PVM}=\frac{1}{2}$&$\eta_*^{PVM}=\frac{1}{4}$\\
\hline
Alice measures all PVMs in a plane & $\eta_*^{RPVM}=\frac{2}{\pi}$ & $\eta_*^{RPOVM}=\frac{1}{\pi}$\\
\hline
\end{tabular}
\caption{The critical detection efficiencies (for Bob) below which the correlations obtained with a maximally entangled two-qubit state and the measurements specified in the table can be reproduced by an LHV model. \label{tab:Bell} }
\end{table}

Beyond the case of maximally entangled two-qubit states with projective measurements on Alice's side, we are unaware of any results demonstrating the locality of such lossy quantum correlations. It is worth noting that these correlations remain extremal in the set of probabilities, making it particularly challenging to construct LHV models. Specifically, convex decomposition methods typically used for white noise~\cite{hirsch2017better} cannot be applied here.

\subsection{LHV models for the singlet}
We start by summarizing the LHV model constructed in Ref.~\cite{gisin1999local},before extending it to include (i) POVMs on Bob's side and (ii) a restriction to measurements in the plane.

\paragraph{The LHV model of  Gisin-Gisin --} 
First, we recall the model of Ref.~\cite{gisin1999local}, which considers the correlations obtained from the single state $\ket{\Psi^-}=\frac{1}{\sqrt{2}}(\ket{01}-\ket{10})$, with Alice and Bob performing measurements with inputs given by vectors $\bm x,\bm y$ on the Bloch sphere. Alice's measurements are ideal PVMs, with output $a=\pm 1$, while Bob's have a finite efficiency $\eta$, with output $b= +1,-1,\varnothing$.  The correlations observed in this scenario are given by
\begin{align}
p(a,b|\bm x,\bm y ) = \bra{\Psi^-} \text{A}_{a|\bm x} \otimes \text{B}_{b|\bm y} \ket{\Psi^-}\\
\end{align}
where ${\rm A}_{a|\bm x}=\frac{1}{2}(\id + a A_{\bm x})$,  ${\rm B}_{b|\bm y}=\begin{cases}
\frac{\eta}{2}(\id + b B_{\bm y}) & b=\pm 1\\
(1-\eta)\id & b=\varnothing\end{cases}$ and we introduced $A_{\bm x} = \bm x \cdot \bm \sigma$ and similarly for Bob.

Note that the choice of the maximally entangled states plays no role, as they are all related by a local unitary transformation, which can be absorbed as a relabeling of the measurement settings (since we consider an invariant set of possible measurements). In the model of \cite{gisin1999local} Alice and Bob share a random anti-parallel pair of vectors on the sphere $(\bm \lambda,\bm \lambda'=-\bm \lambda)$. Alice response function is deterministic
\be
a(\bm \lambda,\bm x) = {\rm sign} \, \bm x \cdot \bm \lambda.
\ee
Bob's response function $p(b|\bm \lambda,\bm y)$ is slightly more complicated, he outputs no-click with probability 
\be
p(\varnothing|\bm \lambda', \bm y) = 1- |\bm y \cdot \bm \lambda' |
\ee
and otherwise outputs the sign $b= {\rm sign}\,  \bm y \cdot \bm \lambda'$. 
Straightforward algebra allows one to verify~\cite{gisin1999local} that this response function reproduces the singlet correlations $p(a,b|\bm x,\bm y )$ for $\eta = \eta_*^{PVM} =\frac{1}{2}$. It is straightforward to adjust the model for any lower value of transmission.\\

\paragraph{Modification of the Gisin-Gisin model to include POVMs on Bob's side--}
\label{app: POVM model}
We now show how the above model can be modified to also simulate lossy POVM measurements on Bob's side. As the first step, note that by relabeling $b=1\to b=0$ and merging $b=-1,\varnothing \to b=\varnothing$ the above model can be used to simulate any two-outcome POVM with $\eta=\frac{1}{2}$
\be
{\rm B}_{b|\bm y} = \begin{cases} \frac{1}{2} \ketbra{\bm y} & b=0 \\
\id -\frac{1}{2} \ketbra{\bm y} & b=\varnothing,
\end{cases}  
\ee
where $ \ketbra{\bm y}  = \frac{1}{2}(\id + \bm y \cdot \bm \sigma)= \frac{1}{2}(\id + B_{\bm y}).$\\

Now let us consider any extremal POVMs on Bob's side, which in the ideal case reads $\{B'_{b}= \alpha_b \ketbra{\bm y_b}\}$ with $\sum_{b=0}^{n-1}\alpha_b = 2$ and $\sum_b \alpha_b\,  {\bm y}_b = 0$, and has $2\leq n\leq 4$ outcomes \cite{d2005classical}. After adding losses of transmission $\eta'$ before this measurement (or lowering its efficiency to $\eta'$ ) the POVM becomes
\be
{\rm B}_b = 
\begin{cases}
    \eta' \, \alpha_b \ketbra{\bm y_b} & 0\leq b \leq n-1 \\
    (1-\eta')\id & b=\varnothing 
\end{cases}.
\ee
For $\eta'=\frac{1}{4}$ this POVM can be decomposed as a convex mixture of $n$ two-outcome POVMs, labeled by $b'\in\{0,\dots,n-1\}$,
\be
{\rm B}_{b|b'} = 
\begin{cases}
\frac{1}{2} \ketbra{\bm y_b} & b=b'\\
\id -\frac{1}{2} \ketbra{\bm y_b}& b=\varnothing\\
    0 & \text{otherwise},
\end{cases}
\ee
Indeed for $p_b'=\frac{\alpha_{b'}}{2}$ (with $\sum p_{b'} =1$) one verifies that
\be
{\rm B}_b =\sum_{b'} p_{b'} {\rm B}_{b|b'} = 
\begin{cases}
    \frac{1}{4} \, \alpha_b \ketbra{\bm y_b} & 0\leq b \leq n-1 \\
    (1-\frac{1}{4})\id & b=\varnothing .
\end{cases}
\ee
Hence, for $\eta \leq \eta_{*}^{POVM}=\frac{1}{4}$ the correlations $p(a,b|\bm x,\{{\rm B}_b\})$ obtained with a maximally entangled states projective measurements of Alice and any lossy measurement of Bob admit a LHV representation.\\

\paragraph{ A LHV model for correlations in a plane. --} 
Now let us restrict the set of possible measurements of Alice to real projective measurements, that is we now have ${\rm A}_{a|\varphi} = \frac{1}{2}(\id +a A_\varphi)$ with
\be
A_\varphi = \cos \varphi\, Z + \sin \varphi \, X.
\ee
We will also change the state to $\ket{\Phi^+}=\frac{1}{\sqrt 2} (\ket{00}+\ket{11})$.
Since $A_\varphi$ and the shared $\Phi^+$ are real (in the computational basis) we find 
\[
p(a,b|\bm x,\bm y ) = \bra{\Phi^+} \text{A}_{a|\bm x} \otimes \text{B}_{b|\bm y} \ket{\Phi^+} = \bra{\Phi^+} \text{A}_{a|\bm x} \otimes \text{B}_{b|\bm y}^* \ket{\Phi^+} = \bra{\Phi^+} \text{A}_{a|\bm x} \otimes \frac{\text{B}_{b|\bm y}+\text{B}_{b|\bm y}^*}{2} \ket{\Phi^+}.
\]
i.e. the only contribution to the correlations comes from the real part ${\rm Re}[ \text{B}_{b|\bm y}] = \frac{\text{B}_{b|\bm y}+\text{B}_{b|\bm y}^*}{2}$ of Bob's measurement operators. We can thus limit his measurements to be real without loss of generality.\\

First, let us assume that they are also projective ${\rm B}_{b|\bm y}= \frac{1}{2}(\id +  b \bm y \cdot \bm \sigma)$. The real part of such PVM 
${\rm Re}[ \text{B}_{b|\bm y}]$ is a convex combination of projective measurements in the plane
${\rm B}_{b|y}= \frac{1}{2}(\id + b B_{y})$ with $B_{y} =\cos \theta \, Z + \sin \theta \, X$. Hence, it is sufficient to consider Bob measuring ${\rm B}_{b|\theta}$. The correlations are then given by 
\begin{align}\label{app: correlations real}
p(a,b|\varphi,\theta ) &= \eta \bra{\Phi^+} \frac{1}{4}\left(\id + a A_\varphi + b B_\theta + ab A_\varphi\otimes B_\theta \right) \ket{\Phi^+} \\
& =  \frac{\eta}{4}\left(1 + a\,b \cos(\varphi-\theta) \right).
\end{align}
We now consider the following LHV model. Let Alice and Bob share a random vector $\bm \lambda =\binom{\cos \zeta}{\sin \zeta}$ on the circle $\zeta\in[0,2\pi]$. As before Alice's response function is deterministic 
\be
a(\bm \lambda, \bm x) = a(\zeta, \varphi) = {\rm sign} \, \bm x \cdot \bm \lambda= \text{sign} \cos(\zeta-\varphi),
\ee
which gives the right marginals $p(a|\varphi)=\frac{1}{2}$.
Bob's response function is similar as before but with a slightly different  
dependence, he outputs no-click with the probability 
\be
p(\varnothing|\zeta, \theta) = 1-|\cos(\zeta-\theta)|,
\ee
and otherwise outputs $b ={\rm sign}\,\cos(\zeta-\theta)$. For the overall no-click probability we find
\be
\eta_{*}^{RPVM}= p(b\neq \varnothing|\theta)= \int_{0}^{2\pi} \frac{\dd \zeta}{2\pi} |\cos(\zeta-\theta)| = \frac{2}{\pi}.
\ee
To see that the full correlations are correct, we introduce the indicator function
\be
\chi(\alpha) =\begin{cases}
    1 & \cos(\alpha) \geq 0 \\
    0 & \text{otherwise}
\end{cases},
\ee
and 
compute
\begin{align}
p(a=1,b=1|\varphi,\theta) &= \int\frac{\dd \zeta}{2\pi} |\cos(\zeta-\theta)| \chi(\zeta-\varphi) \chi(\zeta-\theta)  \\
&= \int \frac{\dd \zeta}{2\pi} |\cos(\zeta)| \chi(\zeta+\theta-\varphi) \chi(\zeta) \\
& =\int_{-\pi/2}^{\pi/2} \frac{\dd \zeta}{2\pi} \cos(\zeta) \chi(\zeta+\theta-\varphi) 
\\
& = \frac{1+\cos(\theta-\varphi)}{2\pi} = \eta_{*}^{RPVM} \frac{1+\cos(\theta-\varphi)}{4}.
\end{align}
This is indeed the same value as in Eq.~\eqref{app: correlations real}. The computation for the remaining probabilities can be done similarly, leading to the conclusion that the LHV model simulates the lossy correlations obtained with a maximally entangled state, Alice performing projective measurements restricted to a plane, and Bob performing any projective measurements with efficiency $\eta\leq \eta_{*}^{RPVM}=\frac{2}{\pi}.
$
\\

Finally, by replicating the discussion of Section.~\ref{app: POVM model} it is straightforward to show that the model can be modified to work when Bob performs any POVM  with efficiency $\eta\leq \eta_{*}^{RPOVM}=\frac{1}{\pi}$.\\

\section{Steering inequalities} \label{app:steeringlemma}

In this appendix, we prove the steering inequality discussed in the main text. In the next section~\ref{app: stearing from main} we demonstrate the inequality \eqref{eq:AJordan} from the main text, relying on a steering inequality demonstrated later (Sec.~\ref{app: steering n}).  In the two following sections \ref{app: steering n} and \ref{app: steering cont}, we present the full characterization of values $(\C_n,\T_n)$ admissible by LHS models when Alice performs two complementary Pauli measurements. Section \ref{app: steering n} treats the case of finite $n$, while section \ref{app: steering cont} discusses the case where \BobL\, has a continuous input.

\subsection{Proof of the inequality \eqref{eq:AJordan}}

\label{app: stearing from main}

\begin{lemma}[A family of steering inequalities]
All correlations $p(a,b|x,y)$ unsteerable for Alice's trusted measurement opertors $  A_x^\alpha = \cos\alpha \frac{Z+X}{\sqrt{2}} +(-1)^x \sin\alpha \frac{Z-X}{\sqrt{2}}$, i.e., admitting a LHS model 
\begin{align}
 p(a,b|x,y)  \stackrel{\rm LHS}{=}   \sum_\lambda  \tr \big( \A_{a|x}^\alpha \, \rho_\lambda \big) p(b|y,\lambda) \quad \text{with} \quad \rho_\lambda \succeq 0,\, \tr  \sum_\lambda \rho_\lambda = 1
\end{align}
and $\A_{a|x}^\alpha = \frac{1}{2}(\id +(-1)^a  A_x^\alpha)$, satisfy the steering inequality
\begin{equation}
n\sin\left(\frac{\pi}{2 n}\right)\C_n^\alpha \leq \sqrt{2} \sin\left(\frac{\pi}{2}\T_n\right) \gamma\left(\sb_\alpha \right)
\end{equation}
where ${\rm s_\alpha} = 2 (\cos(\alpha)+\sin(\alpha))$, $\gamma({\rm s}_\alpha) = \left(\S+\sqrt{8-\S^2}\right)/4$, and 
\begin{align}
\C_n^\alpha &= \frac{1}{n}\sum_{y=0}^{n-1}\sum_{a,b=0}^1 (-1)^{a+b} \Big( \cos{\theta_y} \,  p(a,b|0,y) + \sin\theta_y \, p(a,b|1,y)\Big),
\\
    \T_n &= \frac{1}{n}\sum_{y=0}^{n-1} \sum_{b=0,1} p(b|y) \,.
 \end{align}
\end{lemma}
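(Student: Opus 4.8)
The plan is to reduce the general-$\alpha$ inequality to the tight steering bound for complementary measurements ($\alpha=\tfrac{\pi}{4}$) derived in Sec.~\ref{app: steering n}, which I take as given. I first pass to the assemblage picture. Writing $\sigma_{b|y}=\sum_\lambda p(b|y,\lambda)\rho_\lambda$ and $\tilde\sigma_y\coloneqq\sigma_{0|y}-\sigma_{1|y}$, and using $\A^\alpha_{0|x}-\A^\alpha_{1|x}=A^\alpha_x$, the LHS model gives $\sum_{a,b}(-1)^{a+b}p(a,b|x,y)=\tr(A^\alpha_x\,\tilde\sigma_y)$, so that
\begin{equation}
\C_n^\alpha=\frac1n\sum_{y=0}^{n-1}\tr\!\big(M_y\,\tilde\sigma_y\big),\qquad M_y\coloneqq\cos\theta_y\,A^\alpha_0+\sin\theta_y\,A^\alpha_1 .
\end{equation}
The detection quantity $\T_n=\tfrac1n\sum_y\tr(\sigma_{0|y}+\sigma_{1|y})$ depends only on the total weight of the assemblage, not on the trusted directions nor on how settings are paired with them.

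A short Bloch-vector computation shows that $M_y$ is a combination of two \emph{unit} observables,
\begin{equation}
M_y=\frac{c_+}{\sqrt2}\,N_{\theta_y}+\frac{c_-}{\sqrt2}\,N_{\frac\pi2-\theta_y},\qquad c_\pm\coloneqq\cos\alpha\pm\sin\alpha,\quad N_\theta\coloneqq\cos\theta\,Z+\sin\theta\,X,
\end{equation}
since $M_y$ has Bloch vector $\tfrac1{\sqrt2}(c_+\cos\theta_y+c_-\sin\theta_y)\,Z+\tfrac1{\sqrt2}(c_-\cos\theta_y+c_+\sin\theta_y)\,X$. Hence
\begin{equation}
\C_n^\alpha=\frac{c_+}{\sqrt2}\,\C_n^{\pi/4}+\frac{c_-}{\sqrt2}\,\tilde\C_n,\quad
\C_n^{\pi/4}\coloneqq\frac1n\sum_y\tr(N_{\theta_y}\tilde\sigma_y),\quad
\tilde\C_n\coloneqq\frac1n\sum_y\tr(N_{\frac\pi2-\theta_y}\tilde\sigma_y).
\end{equation}
Both $\C_n^{\pi/4}$ and $\tilde\C_n$ are steering quantities of exactly the $\alpha=\tfrac\pi4$ type, evaluated on the \emph{same} LHS assemblage and hence at the same $\T_n$; they differ only in the set of $n$ trusted directions and in the setting/direction pairing.

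The characterization of Sec.~\ref{app: steering n} gives $\C_n^{\pi/4}\le B$ with $B\coloneqq\sin(\tfrac\pi2\T_n)/\big(n\sin(\tfrac{\pi}{2n})\big)$, and I claim the same bound for $\tilde\C_n$: the directions $\{\tfrac\pi2-\theta_y \bmod\pi\}$ again form an equally spaced grid of $n$ directions, identical to $\{\theta_y\}$ up to a rigid rotation and a relabeling of $y$. The LHS maximum at fixed $\T_n$ is invariant under (i) rotating all directions while conjugating $\rho_\lambda\mapsto U\rho_\lambda U^\dagger$, and (ii) relabeling settings (the response functions $p(b|y,\lambda)$ being arbitrary per $y$); therefore $\tilde\C_n\le B$. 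Flipping Bob's outcomes $0\leftrightarrow1$ sends $\tilde\sigma_y\mapsto-\tilde\sigma_y$ while preserving $\T_n$, so also $\tilde\C_n\ge-B$. Since $c_+\ge0$ on $\alpha\in[0,\tfrac\pi2]$, combining these with the sign of $c_-$ yields
\begin{equation}
\C_n^\alpha\le\frac{c_++|c_-|}{\sqrt2}\,B=\sqrt2\,\max(\cos\alpha,\sin\alpha)\,B .
\end{equation}
Finally $8-\sb_\alpha^2=4(\cos\alpha-\sin\alpha)^2$ gives the elementary identity $\gamma(\sb_\alpha)=\tfrac14\big(\sb_\alpha+\sqrt{8-\sb_\alpha^2}\big)=\max(\cos\alpha,\sin\alpha)$, which turns the last display into $n\sin(\tfrac{\pi}{2n})\,\C_n^\alpha\le\sqrt2\,\sin(\tfrac\pi2\T_n)\,\gamma(\sb_\alpha)$, as claimed.

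The step I expect to be most delicate is transferring the tight $\alpha=\tfrac\pi4$ bound to $\tilde\C_n$, whose directions are a rotated and reindexed copy of the standard grid; this requires stating cleanly the rotation-, relabeling-, and outcome-flip invariances of the LHS value at fixed detection budget $\T_n$. Everything else is the linear decomposition of $M_y$ and the trigonometric identity for $\gamma$.
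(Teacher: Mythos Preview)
Your proof is correct and takes essentially the same approach as the paper: decompose $\C_n^\alpha$ into two $\alpha=\tfrac{\pi}{4}$ steering quantities (your $\C_n^{\pi/4},\tilde\C_n$ are precisely the paper's $\Cs_n,\overline{\Cs}_n$) with coefficients $(\cos\alpha\pm\sin\alpha)/\sqrt2$, bound each via the Corollary in Sec.~\ref{app: steering n}, and combine. You are in fact slightly more explicit than the paper on two points it leaves implicit---the rotation/relabeling symmetry that transfers the bound from $\Cs_n$ to $\overline{\Cs}_n$, and the handling of the sign of $c_-$ (the paper's rewriting $(\cos\alpha-\sin\alpha)/\sqrt2=\sqrt{8-\sb_\alpha^2}/(2\sqrt2)$ tacitly assumes $\alpha\le\tfrac{\pi}{4}$, whereas your use of $|c_-|$ together with the identity $\gamma(\sb_\alpha)=\max(\cos\alpha,\sin\alpha)$ covers the full range).
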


\begin{proof}
    We begin by rewriting the correlators for the LHS model as
    \begin{equation}\label{eq:Jordan_corr}
        \begin{aligned}
          \sum_{a,b=0}^1 (-1)^{a+b} p(a,b|x,y) = \tr \big(A^\alpha_x \sum_\lambda \rho_\lambda (p(0|y,\lambda) -p(0|y,\lambda) )\big)   
             =   \tr\big(A_x^\alpha(\rho_{0|y}- \rho_{1|y})\big),
        \end{aligned}
    \end{equation}
    where $\rho_{b|y}= \sum_\lambda p(b|y,\lambda)\rho_\lambda$.
 This lets us write $\WWn^\alpha$ as
        \begin{align}
        \WWn^\alpha &= \frac{1}{n} \sum_{y=0}^{n-1} \tr \Big( (\cos\theta_y  A_0^\alpha + \sin\theta_y A_1^\alpha) (\rho_{0|y}- \rho_{1|y})\Big)\\
        &= \frac{\cos\alpha+\sin\alpha}{\sqrt{2}}\, \Cs_n + \frac{\cos\alpha-\sin\alpha}{\sqrt{2}}\, \overline{\Cs}_n \,\\
        &= \frac{\sb_\alpha}{2 \sqrt{2}} \, \Cs_n + \frac{\sqrt{8-\sb_\alpha^2}}{2 \sqrt{2}}\, \overline{\Cs}_n\,, \label{eq:W_steer}
    \end{align}
where
\begin{align}
    \Cs_n &:=\frac{1}{n}\sum_{y=0}^{n-1} \tr[ (Z\cos \theta_y+ X\sin \theta_y ) (\rho_{0|y}-\rho_{1|y})] \label{def:Wn} \,,\\
    \overline{\Cs}_n &:=\frac{1}{n}\sum_{y=0}^{n-1} \tr[ (Z\sin \theta_y+ X\cos \theta_y ) (\rho_{0|y}-\rho_{1|y})] \label{def:tilde_Sn}\,.
\end{align}
Similarly, $\TTn$ can be written as
\begin{align}
    \TTn = \Ts_n= \frac{1}{n} \sum_{y=0}^{n-1} \tr[(\rho_{0|y}+ \rho_{1|y})]\,.
\end{align}
We will show in Appendix~\ref{app: steering n} that for any assemblage $\{\rho_{b|y}\}_{b,y}$ that admits a LHS model ($\rho_{b|y} =\sum_\lambda p(b|y,\lambda)\rho_\lambda$), the following inequalities hold:
\begin{equation} 
    \Cs_n  \leq  \frac{\sin \left( \frac{\pi}{2} \Ts_n \right)}{n \sin \left(\frac{\pi }{2 n}\right)} \quad \text{and} \quad \overline{\Cs}_n  \leq  \frac{\sin \left( \frac{\pi}{2} \Ts_n \right)}{n \sin \left(\frac{\pi }{2 n}\right)} \,.
\end{equation}
Substituting the above inequalities into \eqref{eq:W_steer} gives us the desired result in Eq.~\eqref{eq:AJordan}.

\end{proof}

\subsection{Steering inequality with discrete input $\theta_y$ }
\label{app: steering n}

In this section we consider the characterization of the set of unsteerable correlations (LHS) in terms of the following discrete steering observables  

\begin{align}
\label{eq: C disc}
    \Cs_n=\frac{1}{n}\sum_{y=0}^{n-1} \cs (\theta_y)& \qquad \text{with}\qquad \cs(\theta)= \tr[ (\cos(\theta) \,Z +\sin(\theta)\, X) (\rho_{0|\theta}-\rho_{1|\theta})]\\
     \label{eq: T disc}
    \Ts_n =\frac{1}{n}\sum_{y=0}^{n-1} \ts(\theta_y)& \qquad \text{with}\qquad \ts(\theta)= \tr (\rho_{0|\theta}+\rho_{1|\theta}),
\end{align}
where Bob's input is discrete and labeled by the angles $\theta_y = y\frac{\pi}{n}$ for $y \in \{0,\dots,n-1\}$. 
We now prove the following result.\\

\begin{lemma}[LHS set for discrete setting] \label{thrm: steering disc}
    The set of values of $(\Ts_n,\Cs_n)$ in Eq.~(\ref{eq: T disc},\ref{eq: C disc}) attainable by LHS models is the polytope with vertices (extremal points) $\left(\widehat{\Ts}_{k|n},\pm \widehat{\Cs}_{k|n}\right)$ for $k=0,\dots, n$ and
    \begin{align}
      \widehat{\Ts}_{k|n} &= \frac{k}{n} \\
     \widehat{\Cs}_{k|n} &=  \frac{\sin \left( \frac{\pi}{2}  \frac{k}{n}\right)}{n \sin \left(\frac{\pi }{2 n}\right)}.
    \end{align}    
\end{lemma}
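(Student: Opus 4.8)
The plan is to reduce everything to single qubit hidden states and then exploit convexity. Since the steering observables only probe the $Z$-$X$ plane, I would first argue that each hidden state may be taken to be a subnormalized qubit state $\rho_\lambda = \frac{q_\lambda}{2}\big(\id + r_\lambda(\cos\phi_\lambda\, Z + \sin\phi_\lambda\, X)\big)$, with weight $q_\lambda = \tr\rho_\lambda \ge 0$, $\sum_\lambda q_\lambda = 1$, and Bloch radius $r_\lambda \le 1$, because any component outside the $Z$-$X$ plane drops out of both $\Cs_n$ and $\Ts_n$. Introducing the response data $g(y,\lambda) = p(0|y,\lambda) - p(1|y,\lambda)$ and $h(y,\lambda) = p(0|y,\lambda) + p(1|y,\lambda)$, which satisfy $|g(y,\lambda)| \le h(y,\lambda) \le 1$, the two observables become $\Cs_n = \frac1n\sum_{y,\lambda} q_\lambda\, g(y,\lambda)\, r_\lambda \cos(\theta_y - \phi_\lambda)$ and $\Ts_n = \frac1n \sum_{y,\lambda} q_\lambda\, h(y,\lambda)$.

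The set of attainable $(\Ts_n, \Cs_n)$ is convex, since mixing two LHS models gives another LHS model and both quantities are linear in the mixing weights; hence it equals the convex hull of the points produced by a single hidden state ($q_\lambda = 1$). To locate its extreme points I would maximize the generic linear functional $\Cs_n - \mu \Ts_n$ over one such state. Setting $r_\lambda = 1$ and choosing the sign of $g(y) = \pm h(y)$ to match that of $\cos(\theta_y - \phi)$ is always optimal, collapsing the functional to $\frac1n\sum_y h(y)\big(|\cos(\theta_y - \phi)| - \mu\big)$ with $h(y) \in [0,1]$; its maximum is a thresholding (fractional-knapsack) step that puts $h(y) = 1$ exactly on the settings with $|\cos(\theta_y - \phi)| > \mu$ and $h(y) = 0$ otherwise.

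Because $|\cos(\theta_y - \phi)|$ is unimodal on the circle of angles taken modulo $\pi$, the selected settings form a block of $k$ consecutive angles, and for the symmetric placement of $\phi$ their offsets are $(m - \tfrac{k-1}{2})\tfrac{\pi}{n}$ for $m = 0, \dots, k-1$. The key computation is then the closed-form cosine sum
\[
    \sum_{m=0}^{k-1}\cos\!\Big(\big(m - \tfrac{k-1}{2}\big)\tfrac{\pi}{n}\Big) = \frac{\sin\!\big(\tfrac{k\pi}{2n}\big)}{\sin\!\big(\tfrac{\pi}{2n}\big)},
\]
which gives precisely $\Ts_n = k/n = \widehat{\Ts}_{k|n}$ and $\Cs_n = \widehat{\Cs}_{k|n}$. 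Relabeling Bob's outcomes sends $\Cs_n \mapsto -\Cs_n$ while leaving $\Ts_n$ fixed, producing the reflected vertex. Sweeping $\mu$ from large values down to $0$ runs the optimal block size through $k = 0, 1, \dots, n$, so the $2(n+1)$ points $(\widehat{\Ts}_{k|n}, \pm\widehat{\Cs}_{k|n})$ exhaust the exposed extreme points.

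To conclude, since every supporting hyperplane of the convex LHS set attains its maximum at one of these points, the set equals their convex hull, i.e.\ the claimed polytope, with the edges between consecutive vertices realized by mixing the two corresponding deterministic hidden states. The step I expect to be the main obstacle is the optimization over $\phi$: one must verify that the symmetric block is a true maximizer rather than merely a critical point, or equivalently that for non-integer $n\Ts_n$ the best single-state value never exceeds the segment joining the adjacent vertices $\widehat{\Cs}_{k|n}$ and $\widehat{\Cs}_{k+1|n}$. This is where the concavity of $k \mapsto \widehat{\Cs}_{k|n}$ (inherited from $\sin$ on $[0, \tfrac{\pi}{2}]$) must be combined with the thresholding structure, and carrying out that comparison cleanly is the crux of the argument.
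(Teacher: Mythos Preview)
Your approach is essentially the same as the paper's: reduce to pure real qubit hidden states in the $Z$--$X$ plane, use convexity to pass to a single hidden state, observe that the extremal strategies are thresholding strategies that activate a block of $k$ consecutive settings, and compute the resulting cosine sum. The paper phrases the last step as ``for each $k$, maximize over the hidden-state angle $\zeta$'' rather than your dual ``for each slope $\mu$, find the optimal pair $(k,\phi)$'', but the two are equivalent.

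The gap you flag---proving that the \emph{symmetric} placement of $\phi$ is truly optimal---is exactly what the paper spends most of its effort on, and it does require a genuine argument beyond unimodality. The paper's method is: use the discrete translation symmetry to restrict $\zeta\in[0,\tfrac{\pi}{2n}]$, write out explicitly which $k$ settings are selected in that range, split into even and odd $k$, pair adjacent terms, and apply concavity of $\cos$ on $[0,\tfrac{\pi}{2}]$ (i.e.\ $\cos x_1+\cos x_2\le 2\cos\tfrac{x_1+x_2}{2}$) to each pair. This forces the optimum to $\zeta=0$ for odd $k$ and $\zeta=\tfrac{\pi}{2n}$ for even $k$, both of which are precisely your symmetric placement. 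Your Dirichlet-kernel identity then coincides with the closed form the paper obtains after summing the two cases. The paper also includes a short check that the slopes $\widehat{\Cs}_{k+1|n}-\widehat{\Cs}_{k|n}$ are strictly decreasing, confirming that every listed point is genuinely extremal; this is the concavity observation you mention at the end.
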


\begin{proof}
We want to find all pairs of values $(\Ts_n,\Cs_n)$ compatible with a LHS model. Recall that, since $\ts(\theta)$ and $\cs(\theta)$ only depend on the real part of the stats $\rho_{b|\theta}$, one can assume without loss of generality that the hidden states are sampled from $\{\mu_\zeta \Psi_\zeta \}$ where all $\Psi_\zeta$ in Eq.~\eqref{eq: pure real} are pure and real (see App.~\ref{app:steeringlemma} for the full argument).\\

Let us fix a specific hidden state $\Psi_\zeta=\frac{1}{2}(\id + \cos(\zeta) Z + \sin(\zeta) X)$ and consider the values $(\Ts_n^{(\zeta)},\Cs_n^{(\zeta)})$ that can be attained by varying the response function $p(b|\theta_y,\zeta)$. As before we have 
\begin{align}
   \ts^{(\zeta)} (\theta_y) &=  p(0|\theta_y,\zeta)+ p(1|\theta_y,\zeta)\\
    \cs^{(\zeta)}(\theta_y) &=  \big(p(0|\theta_y,\zeta) -p(1|\theta_y,\zeta)\big)\cos(\theta_y-\zeta),
\end{align}
leading to the  bounds 
\begin{align}
0 &\leq \ts^{(\zeta)}(\theta_y) \leq 1\\
|\cs^{(\zeta)}(\theta_y) | &\leq |\cos(\theta_y-\zeta)|\, \ts^{(\zeta)}(\theta_y),
\end{align}
that can be saturated by the right choice of the response funciton $p(b|\theta_y,\zeta)$. For the average quantity $\Cs_n^{(\zeta)} =\frac{1}{n}\sum_{y=0}^{n-1} \cs^{(\zeta)}(\theta_y)$ this gives the bound
\begin{align}
  -\frac{1}{n}\sum_{y=0}^{n-1} |\cos(\theta_y-\zeta)|\, \ts^{(\zeta)}(\theta_y)  \leq \Cs_n^{(\zeta)}\leq \frac{1}{n}\sum_{y=0}^{n-1} |\cos(\theta_y-\zeta)|\, \ts^{(\zeta)}(\theta_y)
\end{align}
which is also saturable.  The set containing the attainable values $(\Ts_n^{(\zeta)},\Cs_n^{(\zeta)})$ is thus described by the linear constraints 
\begin{align}
    \Ts_n^{(\zeta)} &=\frac{1}{n}\sum_{y=0}^{n-1} \ts^{(\zeta)}(\theta_y)\\
    |\Cs_n^{(\zeta)}| &\leq \frac{1}{n}\sum_{y=0}^{n-1} |\cos(\theta_y-\zeta)|\, \ts^{(\zeta)}(\theta_y) \\
    0 \leq \,&\ts^{(\zeta)}(\theta_y) \, \leq 1 \qquad \forall y,
\end{align}
and is a polytope, whose vertices are deterministic strategies (each $\ts^{(\zeta)}(\theta_y) $ is 0 or 1). However, not all deterministic strategies are vertices (some are inside the polytope), and we want to find those which are.  

For all vertices the value of $\Ts_n^{(\zeta)}$ is of the form $\widehat{\Ts}_{k|n}^{(\zeta)} = \frac{k}{n}$ for $k=0,\dots n$ (where $k$ is the number of $\ts^{(\zeta)}(\theta_y)=1$), and it remains to determine the corresponding extremal values of  $\widehat{\Cs}_{k|n}^{(\zeta)}$. We find 
\begin{equation}
|\widehat{\Cs}_{k|n}^{(\zeta)} | = \max_{\sum_y \ts^{(\zeta)}(\theta_y) = k} \frac{1}{n}\sum_{y=0}^{n-1} |\cos(\theta_y-\zeta)|\, \ts^{(\zeta)}(\theta_y)  = \frac{1}{n} \sum_{y \in \mathds{I}_k^{(\zeta)}} |\cos(\theta_y-\zeta)| 
\end{equation}
with  $\mathds{I}_k^{(\zeta)} \subset\{0,\dots,n-1\}$ denoting set of $k$ indices with the highest values of $|\cos(\theta_y-\varphi)|$. Therefore, we conclude that for a fixed local hidden state $\Psi_\zeta$, the set of attainable values is contained in the polytope with vertices
\be\label{eq: ext points}
 \left(\widehat{\Ts}_{k|n}^{(\zeta)}  = \frac{k}{n},\, \pm\, \widehat{\Cs}_{k|n}^{(\zeta)}  =\frac{1}{n} \sum_{y \in \mathds{I}_k^{(\zeta)}} |\cos(\theta_y-\zeta)| \right) \qquad \text{for } k=0,\dots,n
 \ee\\


Next, we consider the values $(\Ts_n,\Cs_n)$ attainable by general LHS models, involving mixture of different $\zeta$. By convexity the extremal points of this set are in the union of the points $ \left( \widehat{\Ts}_{k|n}^{(\zeta)} , \pm\, \widehat{\Cs}_{k|n}^{(\zeta)}  \right)$ 
for all $\zeta$ and $k$. Nevertheless, since $\widehat{\Ts}_{k|n}^{(\zeta)}=\frac{k}{n}$ is independent of $\zeta$, it is sufficient to determine 
\begin{equation}
\widehat{\Cs}_{k|n} := \max_\zeta \, \widehat{\Cs}_{k|n}^{(\zeta)}
\end{equation}
in order to obtain the vertices of the whole  LHS set.

To solve this maximization, note that without loss of generality we can restrict $\zeta \in [0,\frac{1}{2} \frac{\pi}{ n}]$ by the symmetry of the problem. In this case, noting that $|\cos((n-y)\frac{\pi}{n}-\zeta)|= 
|\cos(y \frac{\pi}{n}+\zeta)|$ we see that the largest values of $|\cos(\theta_y-\zeta)|$ are attained in the decreasing order by $\theta_0, \theta_1, \theta_{n-1},\theta_2, \theta_{n-2},\dots$ so that
\begin{align}
\widehat{\Cs}_{k|n}^{(\zeta)} &= \frac{1}{n}\left(\cos(\zeta) + \cos(\frac{\pi}{n} - \zeta)+ \cos(\frac{\pi}{n} + \zeta) +
\cos(2 \frac{\pi}{n} - \zeta)+\cos(2 \frac{\pi}{n} + \zeta)+ \dots\right)
\end{align}
 with exactly $k$ terms. Here by construction all the angles are mapped to the  interval $[0,\frac{\pi}{2}]$ such that the cosines are all positive. It is then convenient to split the analysis in two branches, with even $k=2j$ and odd $k =2j+1$. For the two cases we obtain
\begin{align}\label{eq: sum even}
\widehat{\Cs}_{2j|n}^{(\zeta)} & =\frac{1}{n}\sum_{\ell=1}^j \left(\cos((\ell-1) \frac{\pi}{n}+\zeta)+\cos(\ell \frac{\pi}{n}-\zeta) \right)\\
 \label{eq: sum odd}
\widehat{\Cs}_{2j+1|n}^{(\zeta)} & =  \frac{1}{n}\cos(\zeta) + \frac{1}{n}\sum_{\ell=1}^j \left(\cos(\ell \frac{\pi}{n}-\zeta)+\cos(\ell \frac{\pi}{n}+\zeta) \right).
\end{align}
 Recall that in all of the above expressions the argument of the cosines are in the interval $[0,\frac{\pi}{2}]$, where $\cos(x)$ is a concave function satisfying $\cos(x_1)+\cos(x_2)\leq 2\cos(\frac{x_1+x_2}{2})$. Therefore, we have the following inequalities
\begin{align}
&
\cos(\ell \frac{\pi}{n}- \frac{\pi}{n}+\zeta)+\cos(\ell \frac{\pi}{n}-\zeta) \leq 2\cos(\ell \frac{\pi}{n}-\frac{\pi}{2 n})\\
&\cos(\ell \frac{\pi}{n}-\zeta)+\cos(\ell \frac{\pi}{n}+\zeta) \leq 2\cos(\ell \frac{\pi}{n}) 
\end{align}
and also $\cos(\zeta)\leq 1$. By  using these inequalities for all term in the sums of Eqs.~(\ref{eq: sum even},\ref{eq: sum odd}) we obtain
\begin{align}
\widehat{\Cs}_{2j|n}^{(\zeta)} &\leq  
\widehat{\Cs}_{2j|n}^{(\frac{\pi}{2n})}  =\frac{2}{n}\sum_{\ell=1}^j \cos((\ell-\frac{1}{2}) \frac{\pi}{n}) =\frac{1}{n \sin \left(\frac{\pi }{2 n}\right)}  \sin \left( \pi  \frac{2j}{2n}\right)
\\
\widehat{\Cs}_{2j+1|n}^{(\zeta)} & \leq \widehat{\Cs}_{2j+1|n}^{(0)}  = \frac{1}{n} + \frac{2}{n}\sum_{\ell=1}^j \cos(\ell \frac{\pi}{n})
=
\frac{1}{n \sin \left(\frac{\pi }{2 n}\right)}   \sin \left(\pi \frac{  2j+1}{ 2n}\right),
\end{align}
telling us that that one of the two values $\zeta=0, \frac{\pi}{2 n}$ maximizes $\widehat{\Cs}_{k|n}^{(\zeta)}$ depending on the parity of $k$. 
Combining the two expressions gives the maximal value for all $k$, which takes care of the optimization over $\zeta$
\be
\widehat{\Cs}_{k|n} =  \frac{\sin \left( \pi  \frac{k}{2n}\right)}{n \sin \left(\frac{\pi }{2 n}\right)}
\ee
We have thus established that the pairs of values $(\widehat{\Ts}_{k|n}= \frac{k}{n}, \pm \widehat{\Cs}_{k|n})$ for $k=0,\dots,n$ are saturable and give the LHS set. 

Let us also verify that all of these points are extremal, i.e. none of them are inside the polytope. This can be done by showing that the slope of the line connecting each subsequent pair of points is strictly decreasing
\be
\widehat{\Cs}_{k+1|n} -\widehat{\Cs}_{k|n}>
\widehat{\Cs}_{k+2|n} -\widehat{\Cs}_{k+1|n}\quad\Longleftrightarrow \quad \sin \left( \pi  \frac{k+1}{2n}\right) - \sin \left( \pi  \frac{k}{2n}\right) > \sin \left( \pi  \frac{k+2}{2n}\right)- \sin \left( \pi  \frac{k+1}{2n}\right)
\ee
for all $k=0,\dots,n-2$. The last inequality can be rewritten as
\be
\int^{\pi  \frac{k+1}{2n}}_{\pi  \frac{k}{2n}} \cos(x) \, \dd x > \int^{\pi  \frac{k+2}{2n}}_{\pi  \frac{k+1}{2n}} \cos(x) \,\dd x
\ee
which is easy to see from the fact that $\cos(x)$ is a strictly decreasing function for $x\in\left(0,\frac{\pi}{2}\right]$.\\
\end{proof}


The Lemma \ref{thrm: steering disc} gives a tight characterization of the LHS set in terms of its extremal points, or in other words its facets (piece-wise linear function of $\Ts_n$). However, it will also be useful to give a relaxation of the boundary in terms of a differentiable function of $\Ts_n$.\\

\begin{corollary}[Steering for discrete setting] \label{lemma: disc bound}
     For any LHS model the values of $(\Ts_n,\Cs_n)$ in Eq.~(\ref{eq: T disc},\ref{eq: C disc}) satisfy
          \begin{align} \label{eq: steering ineq disc}
 |\Cs_n|  \leq 
 \frac{\sin \left( \frac{\pi}{2} \Ts_n \right)}{n \sin \left(\frac{\pi }{2 n}\right)}
     \end{align}
\end{corollary}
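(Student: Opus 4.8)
The plan is to derive the smooth bound as an immediate consequence of the exact polytope description already established in Lemma~\ref{thrm: steering disc}. The decisive observation is that the proposed differentiable envelope
\[
g(\Ts_n) := \frac{\sin\left(\frac{\pi}{2}\Ts_n\right)}{n\sin\left(\frac{\pi}{2n}\right)}
\]
coincides with the upper vertices of the LHS polytope: evaluating at $\Ts_n = k/n$ gives $g(k/n) = \sin\!\left(\frac{\pi}{2}\frac{k}{n}\right)/\big(n\sin\left(\frac{\pi}{2n}\right)\big) = \widehat{\Cs}_{k|n}$ for every $k=0,\dots,n$. Thus the smooth curve passes \emph{exactly} through all the extremal points $(\widehat{\Ts}_{k|n},\widehat{\Cs}_{k|n})$ identified in the Lemma, and the whole corollary reduces to a concavity/interpolation argument.

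First I would record that $g$ is concave on the relevant range $\Ts_n\in[0,1]$. This is immediate, since $\frac{\pi}{2}\Ts_n\in[0,\frac{\pi}{2}]$ and $\frac{d^2}{dt^2}\sin\!\left(\frac{\pi}{2}t\right) = -\left(\frac{\pi}{2}\right)^2\sin\!\left(\frac{\pi}{2}t\right)\leq 0$ there, while the prefactor $1/\big(n\sin\left(\frac{\pi}{2n}\right)\big)$ is a positive constant. Next I would invoke Lemma~\ref{thrm: steering disc}: the upper boundary of the attainable LHS region is the piecewise-linear chain joining consecutive vertices $(k/n,\widehat{\Cs}_{k|n})$, and the proof of that Lemma already showed the segment slopes are strictly decreasing, so this chain is precisely the upper hull of the polytope.

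The key step is then that a concave function always lies above the chord connecting any two points on its graph. Applying this on each subinterval $[k/n,(k+1)/n]$ — whose chord endpoints both lie on the graph of $g$ by the interpolation identity above — shows that the polygonal upper boundary is dominated pointwise by $g$. Using the reflection symmetry $\Cs_n\to-\Cs_n$ of the polytope, the same envelope controls $-\Cs_n$, so every point $(\Ts_n,\Cs_n)$ of the LHS set satisfies $|\Cs_n|\leq g(\Ts_n)$, which is the claimed inequality.

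I do not expect a serious obstacle here: the entire content is the concave-majorization of the piecewise-linear hull. The only point genuinely requiring care is the algebraic check $g(k/n)=\widehat{\Cs}_{k|n}$, since it is this exact matching — rather than a loose bound at the vertices — that guarantees each chord is a genuine sub-graph of $g$ and hence that the smooth relaxation remains tight at all extremal points.
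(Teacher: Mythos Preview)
Your proposal is correct and follows essentially the same approach as the paper: both argue that the smooth function $g(\Ts_n)=\sin(\tfrac{\pi}{2}\Ts_n)/\bigl(n\sin(\tfrac{\pi}{2n})\bigr)$ is concave on $[0,1]$ and agrees with the polytope's extremal values $\widehat{\Cs}_{k|n}$ at $\Ts_n=k/n$, so the piecewise-linear LHS boundary is its chordal linearization and hence lies below it. Your write-up simply spells out the concavity check and the chord argument more explicitly than the paper's two-line proof.
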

\begin{proof}
    The bound simply follows from the fact that the function $\frac{\sin \left( \frac{\pi}{2} \Ts_n \right)}{n \sin \left(\frac{\pi }{2 n}\right)}$ is concave
and equals to $\widehat{\Cs}_{k|n}$ at $\Ts_n = \widehat{\Ts}_{k|n}$. In other words, the boundary of the LHS set is a piece-wise linearization of this concave function.
\end{proof}

Two things are worth noting. First, the function $\frac{\frac{\pi}{2n}}{\sin(\frac{\pi}{2n})}\geq 1$ is decreasing with $n$ and tends to $1$ in the limit $n\to \infty$, such that the bound on $\Cs_n$ becomes tighter. Second, in this limit this bound converges to the one of Lemma~\ref{eq app: cont steering witness}, for continous settings.

\subsection{Steering inequality with continuous input $\theta$}

Here, we discuss the case where the input of \BobL\, is continuous $y\in [0,\pi)$, which corresponds to the limit $n\to \infty$.

\label{app: steering cont}
\setcounter{lemma}{0}

\begin{lemma}[Steering witness for arbitrary loss]
    All LHS models ($\rho_{b|\theta} \stackrel{\rm LHS}{=} \sum_\lambda p(b|\theta,\lambda) \rho_\lambda $) satisfy the tight inequality 
    \be\label{eq app: cont steering witness}
|\Cs| \leq \frac{2}{\pi} \sin(\Ts \frac{\pi }{2}),
    \ee
    for the steering observables
    \begin{align}
    \Ts &= \int_0^\pi\frac{\dd \theta}{\pi} \ts (\theta) \qquad \,\, \text{with} \qquad \ts(\theta)=\tr (\rho_{0|\theta}+\rho_{1|\theta})
    \\
    \Cs &= \int_0^\pi \frac{\dd \theta}{\pi} \cs(\theta)\qquad \text{with} \qquad  \cs(\theta) = \tr[ (\cos(\theta) Z +\sin(\theta) X) (\rho_{0|\theta}-\rho_{1|\theta})].
    \end{align}  Furthermore, there exists an LHS model saturating the bound with $\ts(\theta)= t$ and $\cs(\theta) = \frac{2}{\pi} \sin(t \frac{\pi }{2})$ for any $t\in[0,1]$.
\end{lemma}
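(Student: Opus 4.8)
The plan is to mirror the discrete argument of Lemma~\ref{thrm: steering disc}, replacing the finite sum over the angles $\theta_y$ by an integral over $\theta\in[0,\pi)$ and the discrete linear program by its continuous ``bathtub'' analogue. First I would invoke the real-part reduction already used in this appendix: since $\ts(\theta)$ and $\cs(\theta)$ depend only on $\mathrm{Re}[\rho_{b|\theta}]$, one may take the hidden ensemble to consist of pure real qubit states $\Psi_\zeta=\tfrac12(\id+\cos\zeta\,Z+\sin\zeta\,X)$ sampled with weights $\mu_\zeta$. It then suffices to bound the contribution of a single fixed hidden state $\Psi_\zeta$ and to recombine the pieces by convexity.

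For a fixed $\zeta$, the response function $p(b|\theta,\zeta)$ produces $\ts^{(\zeta)}(\theta)=p(0|\theta,\zeta)+p(1|\theta,\zeta)\in[0,1]$ and $\cs^{(\zeta)}(\theta)=\big(p(0|\theta,\zeta)-p(1|\theta,\zeta)\big)\cos(\theta-\zeta)$, so that $|\cs^{(\zeta)}(\theta)|\le|\cos(\theta-\zeta)|\,\ts^{(\zeta)}(\theta)$, with equality attained by a deterministic sign output. The heart of the argument is then the continuous linear program: maximize $\Cs^{(\zeta)}=\int_0^\pi\frac{\dd\theta}{\pi}\cs^{(\zeta)}(\theta)$ over profiles $\ts^{(\zeta)}(\theta)\in[0,1]$ subject to $\int_0^\pi\frac{\dd\theta}{\pi}\ts^{(\zeta)}(\theta)=\Ts^{(\zeta)}$. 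A rearrangement (bathtub) argument shows that the optimum concentrates the available weight $\Ts^{(\zeta)}$ on the set where $|\cos(\theta-\zeta)|$ is largest, namely the interval $|\theta-\zeta|\le\tfrac{\pi}{2}\Ts^{(\zeta)}$ read modulo $\pi$; since $\Ts^{(\zeta)}\le1$ this interval lies where the cosine is nonnegative, yielding
\begin{equation}
|\Cs^{(\zeta)}|\le\frac{1}{\pi}\int_{-\frac{\pi}{2}\Ts^{(\zeta)}}^{\frac{\pi}{2}\Ts^{(\zeta)}}\cos u\,\dd u=\frac{2}{\pi}\sin\!\Big(\tfrac{\pi}{2}\Ts^{(\zeta)}\Big),
\end{equation}
a bound that is crucially \emph{independent} of $\zeta$.

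Because the map $t\mapsto\frac{2}{\pi}\sin(\frac{\pi}{2}t)$ is concave on $[0,1]$, I would close the inequality by averaging over the hidden variable and applying Jensen: $|\Cs|\le\sum_\zeta\mu_\zeta|\Cs^{(\zeta)}|\le\sum_\zeta\mu_\zeta\frac{2}{\pi}\sin(\frac{\pi}{2}\Ts^{(\zeta)})\le\frac{2}{\pi}\sin(\frac{\pi}{2}\Ts)$, which is the stated bound. As a sanity check, the same inequality follows by sending $n\to\infty$ in Corollary~\ref{lemma: disc bound} and using $n\sin(\frac{\pi}{2n})\to\frac{\pi}{2}$; the direct route above has the advantage of not requiring one to justify the convergence of the Riemann sums.

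For the tightness (saturation) claim I would exhibit the rotationally symmetric model: take $\zeta$ uniform on the circle with hidden states $\Psi_\zeta$, and let \BobL\ output no-click with probability $1-\mathbb{1}\!\big[|\cos(\theta-\zeta)|\ge\cos(\tfrac{\pi}{2}t)\big]$ and otherwise the sign of $\cos(\theta-\zeta)$. Averaging over $\zeta$ renders both $\ts(\theta)$ and $\cs(\theta)$ manifestly independent of $\theta$ by translation invariance, and a direct integral gives $\ts(\theta)=t$ together with $\cs(\theta)=\frac{2}{\pi}\sin(\frac{\pi}{2}t)$, so the bound is met for every $t\in[0,1]$. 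I expect the main obstacle to be the per-$\zeta$ optimization: one must justify rigorously that putting all the weight on the top level-set of $|\cos(\theta-\zeta)|$ is optimal (the continuous analogue of keeping the $k$ largest cosines in Lemma~\ref{thrm: steering disc}) and handle the modulo-$\pi$ periodicity carefully, so that the optimal support is a single interval and the resulting bound comes out independent of $\zeta$.
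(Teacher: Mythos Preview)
Your proposal is correct and follows essentially the same route as the paper's proof: reduction to pure real hidden states $\Psi_\zeta$, the per-$\zeta$ bathtub optimization concentrating $\ts^{(\zeta)}(\theta)$ on the top level-set of $|\cos(\theta-\zeta)|$ to obtain the $\zeta$-independent bound $\tfrac{2}{\pi}\sin(\tfrac{\pi}{2}\Ts^{(\zeta)})$, then concavity/Jensen to recombine, and the uniform-$\zeta$ model for saturation. The paper's proof is slightly less explicit about the saturating response function and parametrizes the optimal strategy by $\Omega=\tfrac{\pi}{2}\Ts^{(\zeta)}$, but the arguments are otherwise identical.
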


\begin{proof}
Consider any LHS model. It is represented by a (potentially continuous) state assemblage $\{\rho_\lambda =\mu_\lambda \varrho_\lambda\}$, with each state $\varrho_\lambda$ (with $\tr \varrho_\lambda=1$) sampled with some probability (density) $\mu_\lambda$, and a response function $p(b|\lambda,\theta)$.

To start let show that it is sufficient to consider LHS models with a state assemblage consisting of pure real states
\be\label{eq: pure real}
\varrho_\zeta = \Psi_{\zeta} = \frac{1}{2}(\id +\cos(\zeta) Z + \sin(\zeta) X)
\ee
with $\zeta \in[0,2\pi)$. This immediately follows from two observations. On the one hand, if any state $\rho_\lambda$ appearing the state assemblage is not real it can be replaced by the real state $\varrho_\lambda'=\frac{1}{2}(\varrho_\lambda + \varrho_\lambda^*)$ without affecting the values $\Ts$ and $\Cs$, since the quantities $\cs(\theta)$ and $\ts(\theta)$ are insensitive to complex conjugation of the states $\rho_{b|\theta}$. On the other, any real quantum state $\varrho_\lambda$ can be decomposed as a mixture of real pure quantum states, hence if $\varrho_\lambda$ appears in the assemblage it can be replaced by the corresponding set of pure real states, of the form  of Eq.~\eqref{eq: pure real} in the case of a qubit. Hence, without loss of generality we can consider LHS models with  the state assemblages of the form $\{\mu_\zeta \Psi_\zeta \}$, where $\mu_\zeta$ is a probability (density) on $[0,2\pi)$, and some response funciton $p(b|\theta,\zeta)$. Given the symmetry of the quantities $\Cs$ and $\Ts$ it is also not difficult to show that the density of states can be assumed uniform $\mu_\zeta = \frac{1}{2\pi}$, however we do not require this to proceed with the proof.

Let us now fix a specific hidden state $\Psi_\zeta$ in the assemblage, and ask what are the possible values of $\Cs^{{\zeta}}$ and $\Ts^{(\zeta)}$ achievable for this specific state. Using $\tr \Psi_\zeta (\cos(\theta) Z +\sin(\theta) X)= \cos(\theta-\zeta)$ we find
\begin{align}
\ts^{(\zeta)}(\theta) &= p(0|\theta,\zeta) +  p(1|\theta,\zeta) \\
\cs^{(\zeta)}(\theta) &=\cos(\theta-\zeta)\,  \big(p(0|\theta,\zeta) -  p(1|\theta,\zeta)\big),
\end{align}
which can be simply summarized as
\be
|\cs^{(\zeta)}(\theta) | \leq |\cos(\theta-\zeta)| \ts^{(\zeta)}(\theta) \qquad \text{and} \qquad 0\leq \ts^{(\zeta)}(\theta) \leq 1,
\ee
where all the inequalities can be saturated by the appropriate choice of $p(0|\theta,\zeta)$ and $p(1|\theta,\zeta)$. Plugging this in the average steering quantities $\Ts^{(\zeta)} =\int_0^\pi\frac{\dd \theta}{\pi} \ts^{(\zeta)} (\theta)$ and $\Cs^{(\zeta)} =\int_0^\pi\frac{\dd \theta}{\pi} \cs^{(\zeta)} (\theta)$ we obtain the saturable bounds 
\begin{align}
- \int_0^\pi\frac{\dd \theta}{\pi} |\cos(\theta-\zeta)|\, \ts^{(\zeta)} (\theta)\leq \Cs^{(\zeta)} \leq \int_0^\pi\frac{\dd \theta}{\pi} |\cos(\theta-\zeta)|\,  \ts^{(\zeta)} (\theta),
\end{align}
where $\delta = \theta-\zeta \in [-\zeta, \pi-\zeta]$, i.e. $|\cos(\theta-\zeta)|$ runs over its period of $\frac{\pi}{2}$ twice.
Here, it is direct to see that the strategy maximizing $\pm \Cs^{(\zeta)}$ for a given $\Ts^{(\zeta)}$ consist of setting $\ts^{(\zeta)}(\theta)=1$ on the domain where $|\cos(\theta-\zeta)|$ is larger than a certain value and $\ts^{(\zeta)}(\theta)=0$ where it is smaller. Since $|\cos(\delta)|$ is a decreasing function on $[0,\pi/2]$,  this stratigy achevies the values
\begin{align}
 \Cs^{(\zeta)} &= \pm \int_0^\Omega \frac{ \dd \delta}{\pi/ 2} |\cos(\delta)| = \frac{2}{\pi} \sin(\Omega)\\
 \Ts^{(\zeta)} &= \int_0^\Omega \frac{ \dd \delta}{\pi/ 2} = \frac{2}{\pi} \Omega,
\end{align}
for a parameter $ 0\leq \Omega \leq \frac{\pi}{2}$. Getting rid of this parameter we obtain the following saturable bound
\be
-\Cs^{(\zeta)} \leq \frac{2}{\pi} \sin( \Ts^{(\zeta)} \frac{\pi}{2}) \leq \Cs^{(\zeta)}.
\ee

Since this bound must hold for all state state $\Psi_\zeta$, and the function  $\frac{2}{\pi} \sin( \Ts^{(\zeta)}\frac{\pi}{2})$ is concave by construction (as picking the highest values of cosine first), it follows that the bound must also hold for the average quantities $\Cs = \int \mu(\zeta) \dd \zeta \Cs^{(\zeta)}$ and $\Ts = \int \mu(\zeta) \dd \zeta \Ts^{(\zeta)}$ and remain saturable, concluding the main part of the proof.  Finally to see that the quantities $\ts(\theta)$ and $\cs(\theta)$ can be made independent of $\theta$, simply consider the uniform state assemblage $\mu(\zeta)= \frac{1}{ 2\pi}$ and think of their symmetry.
\end{proof}

\section{CHSH as self-test of Alice's measurements.} \label{app:CHSH}

Let $\rho_{AB}$ be the global state shared by Alice and Bob. Since Alice's measurements are binary, it is convenient to define the two corresponding  (Hermitian) observables as
\be
A_x = \rm{A}_{0|x} - {\rm A}_{1|x}.
\ee
A priori these measurements need not be projective, hovewer one can consider the dilated state $\ket{\rho_{\tilde{A}  A B}}$ with  $\tr_{\tilde{A}} \rho_{\tilde{A} A B} =\rho_{AB}$, on which the measurements are projective, ${\rm A}_{a|x}={\rm A}_{a|x}^2$. We thus assume that the measurements are projective, and the operators $A_x$ have eigenvalues $\pm 1$. Similarly, without loss of generality we can assume that the global state $\rho_{AB}=\ketbra{\Psi}_{AB}$ is pure. Formally, this can be done considering the purification of any state, and absorbing the purifying system as Bob's subsystem, on which his measurements act trivially.  \\

By virtue of Jordan's lemma \cite{jordan1875essai, scarani2012device} there exists a basis of Hilbert space associated to Alice's quantum system such that
\be\label{eq: Jordan A}
A_x = \bigoplus_{ \alpha_i} \left(\cos(\alpha_i) H +(-1)^x \sin( \alpha_i) M \right) 
\ee
where 
 $H =\frac{Z+X}{\sqrt{2}}$, $M =\frac{Z-X}{\sqrt{2}}$, and $X$ with $Z$ are the Pauli operators.  We are free to choose the basis inside each qubit block such that $\alpha \in [0,\pi/2]$, so that the observables go from parallel to antiparallel Pauli operators. This guarantees that  $\cos(\alpha)$ and $\sin(\alpha)$ are both positive. \\
 

Now, let $\Pi_{i}$ be the projectors on the qubits blocks $i$ in the Jordan decomposition (with measurement angle $\alpha_i$). By applying the local decoherence map, Alice can prepare the state
$
\bar \rho_{AB} = \sum_{i} \Pi_{i} \rho_{AB} \Pi_{i} = \bigoplus_{i} \mu_i\,\ketbra{\psi_i}_{AB},
$
where $\mu_i = \tr \Pi_i \rho_{AB} \Pi_i$ is a probability density and $\ket{\psi_i}_{AB}=\frac{1}{\sqrt{\mu_i}}\Pi_i \ket{\Psi}_{AB}$ is a pure state with a qubit on Alice's side.
The state $\bar \rho_{AB}$ is a local post-processing of $\rho_{AB}$ (in particular if it is entangled, $\rho_{AB}$ must be), and the two are indistinguishable in any setup when Alice measurements are restricted to $A_x$. For our purpose, we can identify the two states and simply write 
\be\label{eq: Jordan rho}
\rho_{AB} =\bigoplus_{i} \mu_i\,\ketbra{\psi_i}_{AB},
\ee


Consider an SP test performed with the measurements $\{A_{a|x}\}_x$ and $\{B_{b|y}^S\}_{y}$ by Alice and \BobS, respectively. 
The CHSH score observed in the test 
\be
\S= \sum_{a,b,x,y} (-1)^{a+b+xy}\, p(a,b|x,y,S) 
\ee
can be written in the form $\S = \tr \bar \rho_{AB} \Big((A_0+A_1)B^S_0+ (A_0-A_1)B^S_1 \Big)$
where we defined the observables $B^S_{y} = {\rm B}^S_{0|y}-{\rm B}^S_{1|y}$. 
With the help of the Jordan form in Eq.~(\ref{eq: Jordan A},\ref{eq: Jordan rho}) we obtain 

\begin{align}
A_0+A_1 &= 2  \bigoplus_i \cos(\alpha_i) H 
\\
A_0-A_1 &= 2 \bigoplus_i \sin( \alpha_i) M,
\end{align}
and can rewrite the expected CHSH score as
\begin{align}
\S& = \sum_i \mu_i \, \S_\alpha \qquad \text{with} \qquad
\S_\alpha =  2\bra{\psi_i} \cos(\alpha_i) H \otimes B_0^S + \sin(\alpha_i) M \otimes B_{1}^S \ket{\psi_i} \leq 2 (\cos(\alpha_i)+ \sin( \alpha_i)), 
\end{align}
where we used $\|H \otimes B_0^S\|, \|M \otimes B_1^S\|\leq 1$.  We summarize the conclusions of the last two sections in the following Lemma. 

\begin{lemma} \label{lemma: CHSH}
In the CHSH test, Alice's binary measurements and the state can be decomposed as 
\begin{align}
A_x &= \bigoplus_{i} A_x^{\alpha_i} \quad \text{with} \quad A_x^{\alpha} := \cos( \alpha) H +(-1)^x \sin(\alpha) M \\
\rho_{AB} &=\bigoplus_{i} \mu_i \ketbra{\psi_i},
\end{align}
with $ \alpha_i \in\left[0,\frac{\pi}{2}\right]$,  $H =\frac{Z+X}{\sqrt{2}}$ and $M =\frac{Z-X}{\sqrt{2}}$.
Observing a CHSH score of $\S$ guarantees that
\be\label{eq: LB angle}
\S \leq  \sum_{i} \mu_{i} \,2 \big(\cos(\alpha_i) +\sin( \alpha_i)\big).
\ee
\end{lemma}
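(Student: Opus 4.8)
The plan is to obtain the bound in three stages: reduce the fully device-independent setting to projective measurements on a pure state, apply Jordan's lemma to put Alice's two observables into the canonical $(H,M,\alpha)$ form, and then bound the CHSH operator block by block using only elementary operator-norm inequalities. The structural claims about the decomposition are where the work lies; the final estimate is immediate.

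First I would dispose of the generality of the devices. Alice's two measurements are a priori arbitrary POVMs, so I would Naimark-dilate them: passing to a dilated state $\ket{\rho_{\tilde A A B}}$ with $\tr_{\tilde A}\rho_{\tilde A A B}=\rho_{AB}$ makes the measurements projective, so that each $A_x=\A_{0|x}-\A_{1|x}$ is a genuine observable with eigenvalues $\pm1$. Purifying and absorbing the purifying system into Bob's side, where Alice's operators act as $\id$, then lets me assume $\rho_{AB}=\ketbra{\Psi}$ is pure. Neither manipulation alters the observed statistics, hence neither changes $\S$. Next I would invoke Jordan's lemma for the pair $(A_0,A_1)$: there is a basis of Alice's space in which both are simultaneously block-diagonal with blocks of dimension at most two. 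Inside each two-dimensional block the two traceless $\pm1$ observables are unit Bloch vectors, and choosing the block basis so that their bisector points along $H=\tfrac{Z+X}{\sqrt2}$ and their half-difference along $M=\tfrac{Z-X}{\sqrt2}$ gives exactly $A_x^{\alpha_i}=\cos\alpha_i\,H+(-1)^x\sin\alpha_i\,M$; orienting each block so that $\cos\alpha_i,\sin\alpha_i\ge0$ pins $\alpha_i\in[0,\tfrac\pi2]$, and the one-dimensional blocks (where $A_0,A_1$ commute) are absorbed as the endpoints $\alpha_i\in\{0,\tfrac\pi2\}$. Letting $\Pi_i$ be the projector onto block $i$, I would note that $\Pi_i$ commutes with both $A_0$ and $A_1$, so the pinching channel $\rho\mapsto\sum_i\Pi_i\rho\,\Pi_i$ leaves every correlator built from the $A_x$ untouched; I may therefore identify $\rho_{AB}$ with its block-diagonal image $\bigoplus_i\mu_i\ketbra{\psi_i}$, where $\mu_i=\tr(\Pi_i\rho_{AB}\Pi_i)$ and $\ket{\psi_i}\propto\Pi_i\ket{\Psi}$ is pure.

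Finally I would bound the score. Writing the CHSH operator as $(A_0+A_1)\tot B^{\rm S}_0+(A_0-A_1)\tot B^{\rm S}_1$ and using $A_0+A_1=2\bigoplus_i\cos\alpha_i\,H$ and $A_0-A_1=2\bigoplus_i\sin\alpha_i\,M$, the expectation factorises across blocks,
\begin{equation}
\S=\sum_i\mu_i\,2\bra{\psi_i}\!\big(\cos\alpha_i\,H\tot B^{\rm S}_0+\sin\alpha_i\,M\tot B^{\rm S}_1\big)\!\ket{\psi_i}.
\end{equation}
Since $H,M,B^{\rm S}_0,B^{\rm S}_1$ are all $\pm1$ observables we have $\|H\tot B^{\rm S}_0\|,\|M\tot B^{\rm S}_1\|\le1$, and because $\cos\alpha_i,\sin\alpha_i\ge0$ each block contributes at most $2(\cos\alpha_i+\sin\alpha_i)$, yielding $\S\le\sum_i\mu_i\,2(\cos\alpha_i+\sin\alpha_i)$ as claimed. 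I expect the main obstacle to be the structural part rather than the estimate: stating Jordan's lemma in precisely the $(H,M,\alpha)$ gauge with $\alpha\in[0,\tfrac\pi2]$, cleanly folding in the one-dimensional blocks, and justifying that the pinching/decoherence step genuinely preserves all correlators relevant to both the short- and long-path tests, so that replacing $\rho_{AB}$ by its block-diagonal form is truly without loss of generality.
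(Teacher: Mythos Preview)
Your proposal is correct and follows essentially the same route as the paper: Naimark-dilate to projective measurements, purify into Bob's side, apply Jordan's lemma in the $(H,M,\alpha)$ gauge with $\alpha_i\in[0,\tfrac{\pi}{2}]$, dephase the state across Jordan blocks (using that $\Pi_i$ commutes with $A_x$), and then bound the CHSH operator block by block via $\|H\otimes B^{\rm S}_0\|,\|M\otimes B^{\rm S}_1\|\le1$. Your identification of the structural steps as the nontrivial part, and the blockwise norm estimate as immediate, matches the paper's presentation exactly.
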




\section{Proof of the main result}

\label{app: proof main}

In this Appendix we complete the proof of the main result Eq.~\eqref{eq:main}, sketched in the main text. We start by giving a general linking between steering inequalities and routed Ball tests.

\subsection{Routed Bell tests from steering inequalities}

First, rewriting the definition of the SRQ correlations in Eq.~\eqref{eq: def SRQ} using the Jordan's decomposition of Alice's measurements gives 
\begin{align} 
p(a,b|x,y,r) &\stackrel{\rm SRQ}{=} \begin{cases}
     \sum_i \mu_i \tr \big( \A_{a|x}^{\alpha_i} \otimes (\sum_\lambda p(b|y,\lambda) \,  {\rm E}_\lambda ) \,\ketbra{\psi_i}_{AB} \big) & r = L\\
    \sum_i \mu_i \tr (\A_{a|x}^{\alpha_i} \otimes \B_{b|y}^S \, \ketbra{\psi_i}_{AB} )&  r = S  
    \end{cases}\\
    &{\rm with} \qquad \A_{a|x}^\alpha := \frac{1}{2}(\id + (-1)^a (\cos(\alpha) H +(-1)^x \sin(\alpha) M).
\end{align}
For each value of $\alpha$ the expression in the long pass defines a local hidden state (LHS) model~\cite{wiseman2007,Uola2020,Skrzypczyk2015} 
\begin{equation}
    p_i(a,b|x,y,L) \stackrel{\rm LHS}{:=}  \sum_\lambda p(b|y,\lambda) \tr \big( \A_{a|x}^{\alpha_i} \, \rho_i(\lambda) \big),
\end{equation}
with Alice's hidden states given by $\rho_i(\lambda) := \tr_B (\ketbra{\psi_i}_{AB} \, {\rm E}_\lambda)$, such correlations $p_i(a,b|x,y,L)$ are called {\it unsteerable} for the trusted measurements ${\rm A}^{\alpha_i}_{a|x}$.  In turn, the CHSH score $\S$ observed in the short pass bounds~\eqref{eq: LB angle} the distribution of the angle $\alpha$. Hence,
\begin{align}\label{eq: step alsdf}
  p(a,b|x,y,r) \, \,\text{is SRQ } \implies  
  \begin{cases}
 p(a,b|x,y,L) = \sum_i \mu_i \, p_i(a,b|x,y,L), \text{ unsteerable for } {\rm A}^{\alpha_i}_{a|x}  \\
 \sum_i \mu_i {\rm s}_{\alpha_i}  \geq \S \\
  \end{cases}
\end{align}
with ${\rm s}_\alpha = 2 (\cos(\alpha)+\sin(\alpha))$.\\

 This expression  allows one to promote steering inequalities to routed Bell tests. More precisely, consider a {\it family of steering inequalities} ${\rm T}\big[p(a,b|x,y) ;{\rm s}_\alpha\big]\leq 0$ for all measurement angles $ \alpha \in\left[0,\frac{\pi}{2}\right]$, i.e. a scalar function such that 
\begin{equation}\label{eq: steering familily}
   p(a,b|x,y)   \text{ is unsteerable for } {\rm A}_{a|x}^\alpha \implies \quad {\rm T}\big[p(a,b|x,y) ;{\rm s}_\alpha\big] \leq 0.
\end{equation}
Combining the steering inequalities with Eq.~\eqref{eq: step alsdf} gives the following lemma.

\begin{lemma}
     \label{cor: 4}
In the routed scenario (Fig.~\ref{fig:1}) all SRQ correlations $p(a,b|x,y,r)$ (Eq.~\ref{eq: def SRQ}) satisfy the routed Bell inequality
\begin{align}
\widehat{\rm T}\big[p(a,b|x,y,L) ;\S \big] \leq 0,
\end{align}
    where $\S \geq 2$ is the CHSH score in the short path, and $\widehat{\rm T}[\bm p; {\rm s}]$ is any scalar function which is
    \begin{itemize}
        \item[(i)] a lower bound for a family of steering inequalities (Eq.~\ref{eq: steering familily}), i.e. $\widehat{\rm T}[\bm p; {\rm s}]\leq {\rm T}[\bm p; {\rm s}]$ \\
        \item[(ii)]  convex, i.e. $\widehat{\rm T}[\lambda \bm p +(1-\lambda)\bm p';\lambda {\rm s} +(1-\lambda){\rm s}']\leq \lambda \widehat{\rm T}[ \bm p ; {\rm s}]+(1-\lambda) \widehat{\rm T}[ \bm p' ; {\rm s}']$\\
        \item[(iii)] increasing in $\rm s$, i.e. $\widehat{\rm T}[\bm p; {\rm s}]\leq \widehat{\rm T}[\bm p; {\rm s}']$ for ${\rm s}\leq {\rm s'}$.\\
    \end{itemize}
\end{lemma}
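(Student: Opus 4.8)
The plan is to combine the structural decomposition of SRQ correlations in Eq.~\eqref{eq: step alsdf} with the three abstract properties (i)--(iii) of $\widehat{\rm T}$, so that the proof collapses to a short chain of inequalities. The starting observation is that Eq.~\eqref{eq: step alsdf} already does the heavy lifting: it tells us that every SRQ correlation has a long-path marginal $p(a,b|x,y,L)=\sum_i \mu_i\, p_i(a,b|x,y,L)$ that is a convex mixture of correlations $p_i$, each \emph{unsteerable} for the trusted measurements ${\rm A}^{\alpha_i}_{a|x}$, and that the associated angles obey the CHSH constraint $\sum_i \mu_i\, {\rm s}_{\alpha_i}\geq \S$. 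Everything else is bookkeeping around these two facts.

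First I would apply the defining property of the steering family, Eq.~\eqref{eq: steering familily}, blockwise: since each $p_i$ is unsteerable for ${\rm A}^{\alpha_i}_{a|x}$, we get ${\rm T}[p_i;{\rm s}_{\alpha_i}]\leq 0$ for every $i$, and property (i) upgrades this to $\widehat{\rm T}[p_i;{\rm s}_{\alpha_i}]\leq {\rm T}[p_i;{\rm s}_{\alpha_i}]\leq 0$. Next I would invoke the joint convexity (ii) on the mixture indexed by $i$, evaluating $\widehat{\rm T}$ at the averaged correlation and the averaged score \emph{simultaneously}:
\begin{align}
\widehat{\rm T}\!\Big[\sum_i \mu_i\, p_i;\ \sum_i \mu_i\, {\rm s}_{\alpha_i}\Big]
\leq \sum_i \mu_i\, \widehat{\rm T}[p_i;{\rm s}_{\alpha_i}]
\leq 0.
\end{align}
Finally, since $p(a,b|x,y,L)=\sum_i \mu_i\, p_i$ while the observed score obeys $\S\leq \sum_i \mu_i\, {\rm s}_{\alpha_i}$, the monotonicity (iii) in the second argument lets me replace the averaged score by the smaller value $\S$ without increasing $\widehat{\rm T}$, yielding $\widehat{\rm T}[p(a,b|x,y,L);\S]\leq \widehat{\rm T}[\sum_i \mu_i\, p_i;\sum_i \mu_i\, {\rm s}_{\alpha_i}]\leq 0$, which is exactly the claimed routed Bell inequality.

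The steps are essentially forced, so the only points needing care are technical. Property (ii) is stated for a two-term mixture, so I would first extend it to the finite (or countable) mixture over Jordan blocks by a straightforward induction, i.e.\ a Jensen-type argument applied to the pair $(\bm p,{\rm s})$; this is where one must check that the averaged correlation and the averaged score are fed into $\widehat{\rm T}$ consistently, rather than averaging one argument while holding the other fixed. I expect the genuine conceptual content to lie entirely in Eq.~\eqref{eq: step alsdf} (itself obtained earlier from Jordan's lemma and Lemma~\ref{lemma: CHSH}), so the main ``obstacle'' here is really just verifying that the three hypotheses on $\widehat{\rm T}$ are exactly the minimal ones making the convexity step and the subsequent score-relaxation step valid---in particular that the monotonicity is in the correct direction, so that lowering the score from $\sum_i\mu_i\,{\rm s}_{\alpha_i}$ down to $\S$ can only decrease $\widehat{\rm T}$.
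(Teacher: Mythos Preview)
Your proposal is correct and essentially identical to the paper's own proof: the paper also invokes Eq.~\eqref{eq: step alsdf}, then chains the three properties to obtain $\widehat{\rm T}[\bm p;\S]\underset{(iii)}{\leq}\widehat{\rm T}[\bm p;\sum_i\mu_i {\rm s}_{\alpha_i}]\underset{(ii)}{\leq}\sum_i\mu_i\widehat{\rm T}[\bm p_i;{\rm s}_{\alpha_i}]\underset{(i)}{\leq}\sum_i\mu_i{\rm T}[\bm p_i;{\rm s}_{\alpha_i}]\leq 0$. The only cosmetic difference is the order in which (i)--(iii) are written down; your remark about extending convexity from two terms to a mixture by induction is a fair technical footnote that the paper leaves implicit.
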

\begin{proof}
To shorten the notation we denote $\bm p = p(a,b|x,y,L)$ and $\bm p_i = p_i(a,b|x,y,L)$. All SRQ correlations satisfy
\begin{align}
\begin{cases} \bm p = \sum_i \mu_i \, \bm p_{i}, \text{ unsteerable for } {\rm A}^{\alpha_i}_{a|x}  \\
 \S\leq \sum_i \mu_i\, {\rm s}_{\alpha_i}  \\
  \end{cases}.
\end{align}
Since all $\bm p_i$ are unsteerable for ${\rm A}^{\alpha_i}_{a|x}$, and $\rm T$ are steering inequalities, we have ${\rm T}\big[\bm p_i ;{\rm s}_{\alpha_i} \big]\leq 0$. Using the properties $(i-iii)$  we obtain
\begin{equation}
   \widehat{\rm T}\big[\bm p;\S \big] \underset{(iii)}{\leq} \widehat{\rm T}\big[\bm p ;\sum_i \mu_i\, {\rm s}_{\alpha_i} \big]  = \widehat{\rm T}\big[\sum_i \mu_{i} \,\bm p_{i} ;\sum_i \mu_i\, {\rm s}_{\alpha_i} \big] \underset{(ii)}{\leq} \sum_i \mu_i \, \widehat{\rm T}\big[\bm p_{i} ;{\rm s}_{\alpha_i} \big] \underset{(i)}{\leq} \sum_i \mu_i {\rm T}\big[\bm p_{i} ;{\rm s}_{\alpha_i} \big]\leq 0,
\end{equation}
Note that when $\S<2$, one can use the algebraic bound $\sum_\alpha \mu_i \, {\rm s}_{\alpha_i}\geq 2$ in the first inequality. That is, the routed Bell inequity is valid with $\widehat{\rm T}\big[p(a,b|x,y,L) ;2 \big] \leq 0$.
\end{proof}

\subsection{The main result}

We now give a proof of the general result, in a slightly more general form as compared to the main text. Recall that our routed Bell ineis given in terms of the two long path qauntities
\begin{align}
\C_n &= \frac{1}{n}\sum_{y=0}^{n-1}\sum_{a,b=0}^1 (-1)^{a+b} \Big( \cos{\theta_y} \,  p(a,b|0,y,L) + \sin\theta_y \, p(a,b|1,y,L)\Big),
\\
    \T_n &= \frac{1}{n}\sum_{y=0}^{n-1} \sum_{b=0,1} p(b|y,r=L) \,.
\end{align}
and the CHSH score $\S= \sum_{a,b,x,y} (-1)^{a+b+x y}\, p(a,b|x,y,S)$ in the short path. 

\begin{result}[Routed Bell inequality] \label{result app :main}
(i) For $\S\geq 2$, all SRQ correlations satisfy the routed Bell inequality
     \be \label{app: main} 
   n\sin\left(\frac{\pi}{2 n}\right)\C_n\leq  \widehat{\rm R }(\T_n,\S) ,
    \ee
    where $\widehat{\rm R }(t,s)$ is the concave envelope of the function ${\rm R }(t,s):=\sqrt{2} \sin(\frac{\pi}{2}{t}) \gamma(s)= \sin(\frac{\pi}{2}{t})\frac{s + \sqrt{8 - 
    s^2}}{2\sqrt{2}}$
     on the domain $(t,s)\in[0,1]\times [2,2 \sqrt 2]$.
     
(ii) The function $ {\rm R }(t,s)= \widehat{\rm R}(t,s)$ is equal to its concave envelope iff
\be \label{app eq: cond conv}\frac{\tan \left(t\frac{\pi }{2}\right)}{t \frac{\pi }{2}} \geq \frac{2 +\sqrt{(2-\epsilon ) \epsilon } -\epsilon  (5-2 \epsilon ) }{2 (1-\epsilon )}
\ee
with $\epsilon = 1- \frac{s}{2\sqrt 2}$. Moreover, the condition \eqref{app eq: cond conv} is implied by $\frac{\pi}{2} t\geq {\rm R }(t,s)\,\Longleftrightarrow\, \frac{\pi}{2}t/\sin\left(\frac{\pi}{2}t\right) \geq   \sqrt{ 2} \, \gamma(s)$.

(iii) All SRQ correlations satisfy the following family of linear routed Bell inequalities
\begin{equation}\label{eq app: lin Bell}
   \forall \beta \in \,]0,1]\,, \xi \in [1/\sqrt{2},1]\,, \quad \frac{n\sin(\frac{\pi}{2n})}{\beta} \WWn -  \frac{\pi}{2} \TTn  \le h_{\beta}(\xi) +\left(\frac{\chsh}{2\sqrt{2}} - \xi \right)\, h'_\beta(\xi) \,,
\end{equation}
where,
\begin{equation}
    h_{\beta}(x) := \sqrt{ f_{\beta}(x)^2-1}\,\, - \arccos({1}/{f_{\beta}(x)})\,, \quad \quad f_{\beta}(x) := \frac{x+\sqrt{1-x^2}}{\beta}.
\end{equation}

(iv) The bounds~(\ref{app: main},\ref{eq app: lin Bell})  are tight in the limit of continuous input $n\to \infty $, i.e. there is an explicit SRQ model saturating them.
\end{result}

The statement of the result in the main text is obtained by combining $(i)$ with $(ii)$: SRQ correlations satisfy  $n\sin\left(\frac{\pi}{2 n}\right)\C_n\leq  \rm R (\T_n,\S)$ for  $\frac{\pi}{2}\T_n/\sin\left(\frac{\pi}{2}\T_n\right) \geq   \sqrt{ 2} \, \gamma(\S)$.
 In the next sections, we prove the assertions in the following order {\it (i),(iii),(ii),(iv)}.

\subsection{(i) A general  bound on the SRQ correlations}
\label{app:Result1}

Consider the family of steering inequalities
\be{\rm T}[p(a,b|x,y);{\rm s}_\alpha] = n\sin\left(\frac{\pi}{2 n}\right)\C_n^\alpha - {\rm R}(\T_n,{\rm s}_\alpha) =n\sin\left(\frac{\pi}{2 n}\right)\C_n^\alpha-\sqrt{2} \sin\left(\frac{\pi}{2}\T_n\right) \gamma\left(\sb_\alpha \right)\leq 0
\ee
derived in App.~\ref{app: stearing from main}. Note that ${\rm T}[p(a,b|x,y);{\rm s}]$ is monotonically increasing in ${\rm s}\in[2,2\sqrt 2]$ since  $\sin\left(\frac{\pi}{2}\T_n\right)$ is positive and $ \gamma\left(\sb_\alpha \right)$ is monotonically decreasing.

Let $\widehat{\rm T}[p(a,b|x,y);{\rm s}_\alpha]$ be the {\it convex envelope} of this function~\cite{boyd2004convex}. By definition, it is the maximal convex function such that 
\be
\widehat{\rm T}[p(a,b|x,y);{\rm s}]\leq {\rm T}[p(a,b|x,y);{\rm s}].
\ee
In addition it is also monotonically increasing in $\sb$~\cite{boyd2004convex}. Therefore, $\widehat{\rm T}[p(a,b|x,y);{\rm s}]$ satisfies the assumptions $(i-iii)$ of the lemma~\ref{cor: 4}, which guarantee that  
\be
\widehat{\rm T}[p(a,b|x,y);\S]\leq 0
\ee
is a routed Bell inequality. Finally note that 
\be
\widehat{\rm T}[p(a,b|x,y);\S] = n\sin\left(\frac{\pi}{2 n}\right)\C_n - \widehat{\rm R}(\T_n,\S),
\ee
where $\widehat{\rm R}(t,{\rm s})$ is the concave envelope of ${\rm R}(t,{\rm s})$ (equivalently $-\widehat{\rm R}(t,{\rm s})$ is the convex envelope of ${\rm R}(t,{\rm s})$). Hence
\be
n\sin\left(\frac{\pi}{2 n}\right)\C_n \leq  \widehat{\rm R}(\T_n,{\rm s}_\alpha)
\ee
is a routed Bell inequality, proving the assertion $(i)$.

\subsection{(iii) A family of routed Bell tests linear in the LP quantities}

\label{app: Linear RBT}

In the following, we will derive a family of linear routed Bell inequalities for SRQ correlations. Let ${\rm E}_{n,\beta} (\S) $ denote the maximum value of the expression below that can be achieved by SRQ correlations when a CHSH value of at least $\chsh$ is observed in the SP, i.e.,
\begin{equation}\label{appeq:routed_bell_inequality1}
    \WWn - {\lambda_n}\,{\beta}\, \frac{\pi}{2} \,\TTn \le {\rm E}_{n,\beta} (\S) , \,
\end{equation}
where $\beta \in \,]0,1]$ is a constant, $\lambda_n := {1}/({n\sin(\frac{\pi}{2n})})$. Using the Jordan decomposition, and the steering inequality (Eq.~\eqref{eq:AJordan}) $\WWn^{\alpha} \le \lambda_n  \sqrt{2} \sin\left(\frac{\pi}{2}\TTn \right) \gamma(\sb_{\alpha})$\,, we get
\begin{align}
   \WWn - {\lambda_n}\,{\beta}\, \frac{\pi}{2}\, \TTn &=
   \sum_i  \mu_i \left[\WWn^{\alpha_i} -{\lambda_n}\,{\beta}\,\frac{\pi}{2}\,\TTn^i \right] \\
   &\le \sum_i  \mu_i \left[ \lambda_n  \sqrt{2} \sin\left(\frac{\pi}{2}\TTn^i\right) \gamma(\sb_{\alpha_i}) -{\lambda_n}\,{\beta}\, \frac{\pi}{2} \,\TTn^i \right] \, \label{appeq:routed_bell_inequality2}.
\end{align}
To simplify the notation, let us define 
\begin{align}
    C_{\alpha_i} := \frac{\sb_{\alpha_i}}{2\sqrt{2}}\,, \quad f_{\beta}(x) := \frac{\sqrt{2}}{\beta} \gamma(2 \sqrt{2} \,x ) = \frac{x+\sqrt{1-x^2}}{\beta}, \quad \tilde{ \TTn}^i := \frac{\pi}{2}\TTn^i\,,
\end{align}
which gives us the following bound on $C_{\alpha_i}$, from Eq.~\eqref{eq:SPtest}, due to the observed value of $\chsh$ in the SP
\begin{equation}\label{appeq:SPtest}
    \sum_i \mu_i C_{\alpha_i} = \frac{1}{2\sqrt{2}}\sum_i \mu_i \sb_{\alpha_i} \ge \frac{\mathcal{S}}{2\sqrt{2}}\,.
\end{equation}
This lets us rewrite Eq.~\eqref{appeq:routed_bell_inequality2} as
\begin{equation}\label{eq:SRQ_bound1}
    \WWn - {\lambda_n}\,{\beta}\, \frac{\pi}{2}\, \TTn \le \lambda_n \, \beta \,\sum_i \mu_i \left[ f_{\beta}(C_{\alpha_i}) \sin(\tilde \TTn^i) - \tilde \TTn^i  \right]\,.
\end{equation}
Note that since $C_{\alpha_i} \in [1/\sqrt{2},1]$ and $\beta \in\, ]0,1]$, we have $f_{\beta}(C_{\alpha_i}) \in [1, \infty[\,$. Furthermore, the family of functions 
\begin{equation}
    g_{\alpha,\beta}(\tilde \TTn):=f_{\beta}(C_{\alpha}) \sin(\tilde \TTn) -  \tilde \TTn \,,
\end{equation}
defined on the domain $\tilde \TTn \in [0,\pi/2]$, are differentiable and admit a unique maximum at the point $\tilde \TTn =  \arccos\left({1}/{f_{\beta}(C_{\alpha})}\right) $. Substituting this value of $\tilde \TTn $ in Eq.~\eqref{eq:SRQ_bound1}, and using the identity $\sin(\arccos(x)) = \sqrt{1-x^2}$, we obtain
\begin{equation}
   \WWn - {\lambda_n}\,{\beta}\, \frac{\pi}{2}\, \TTn  \le \lambda_n \, \beta \, \sum_i \mu_i \left[ \sqrt{f_{\beta}(C_{\alpha_i})^2-1}\,\, - \arccos({1}/{ f_{\beta}(C_{\alpha_i})}) \right]\,.
\end{equation}
To proceed further, we note that the family of functions 
\begin{equation}\label{appeq:h}
    h_{\beta}(x) := \sqrt{ f_{\beta}(x)^2-1}\,\, - \arccos({1}/{f_{\beta}(x)})\,
\end{equation}
are nonincreasing and concave for $x \in [1/\sqrt{2},1]$, i.e., $h_{\beta}'(x) \le 0$ and $h_{\beta}''(x) \le 0$ whenever $\beta \in \,]0,1]$ (see App.~\ref{app:concavity}). Therefore,
\begin{align}
        \WWn - \lambda_n \, \beta \, \frac{\pi}{2}\, \TTn  &\le \lambda_n\, \beta \, \sum_i \mu_i\, h_{\beta}(C_{\alpha_i}) \\
        &\le \lambda_n \, \beta \, h_{\beta}\left(\sum_i \mu_i C_{\alpha_i}\right)\,\\
        &\le \lambda_n \, \beta \, h_{\beta}\left(\frac{\mathcal{S}}{2\sqrt{2}}\right) := {\rm E}_{n,\beta} (\S) \label{appeq:routed_bell_inequality3},
\end{align}
where we have used the fact that $h_{\beta}(C_{\alpha})$ is a concave nonincreasing function, and is hence maximized for the minimum value of its argument, which is given by Eq.~\eqref{appeq:SPtest}.

To obtain a linear routed Bell inequality from above, we only need to note that since $h_{\beta}(x)$ is a concave function, it is upper bounded by the family of tangents at every point of it argument. Explicitly, 
\begin{equation}
    \forall \xi \in [1/\sqrt{2},1]\,, \quad  h_{\beta}\left(\frac{\mathcal{S}}{2\sqrt{2}}\right) \le h_{\beta}(\xi) + h'_\beta(\xi) \left(\frac{\chsh}{2\sqrt{2}} - \xi\right)\,,
\end{equation}
which, when combined with Eq.~\eqref{appeq:routed_bell_inequality3}, leads to the linear routed Bell inequality
\begin{equation}
   \forall \beta \in \,]0,1]\,, \xi \in [1/\sqrt{2},1]\,, \quad \frac{n\sin(\frac{\pi}{2n})}{\beta} \WWn -  \frac{\pi}{2} \TTn  \le h_{\beta}(\xi) + \left(\frac{\chsh}{2\sqrt{2}} - \xi \right)\,h'_\beta(\xi) \,.
\end{equation}
We can use these linear inequalities to arrive at the main result in Eq.~\eqref{eq:main}. We discuss this below.

\subsubsection{Using the linear inequalities to characterize the SRQ set}

\label{sec: line to full} 

To arrive at the main result in the form of Eq.~\eqref{eq:main}, we first note that Eq.~\eqref{appeq:routed_bell_inequality3} can be written as
\begin{equation}\label{eq:routed_bell_inequality2}
    \forall \beta \in \,]0,1], \quad \frac{n\sin(\frac{\pi}{2n})}{\beta} \WWn - \frac{\pi}{2} \TTn \le  \sqrt{\left({\sqrt{2}\, \gamma({\chsh})}/{\beta}\right)^2-1}\,-\arccos\left({\beta}/{(\sqrt{2} \,\gamma(\chsh)})\right).
\end{equation}
We can now use this family of inequalities to bound the set of allowed values of $\WWn$, for a given value of $\TTn$ and $\chsh$. To this end, let us first reparametrize $\beta$ in terms of $u$, defined by
\begin{equation}
    \beta = \sqrt{2} \,\gamma(\chsh) \cos(u)\,, u \in [0,\frac{\pi}{2}[\,,
\end{equation}
which is valid whenever $\sqrt{2} \,\gamma(\chsh) \le 1/\cos(u)$. Substituting this into Eq.~\eqref{eq:routed_bell_inequality2}, we obtain
\begin{equation}\label{eq:routed_bell_inequality3}
    {n\sin(\frac{\pi}{2n})} \WWn \le \sqrt{2}\, \gamma(\chsh) \left[ \frac{\pi}{2} \TTn \cos(u) + \sin(u) - u \cos(u) \right].
\end{equation}
Although the upper bound on $\WWn$ above is valid for every choice of $u \in [0,\frac{\pi}{2}[$ and $\sqrt{2} \,\gamma(\chsh) \le 1/\cos(u)$, it is convenient to eliminate the parameter $u$ by finding the lowest upper bound. This is achieved by minimizing the RHS of the above equation over $u$, which gives us $u = (\pi/2) \TTn$. We substitute this in the equation above to obtain
\begin{equation}\label{eq:routed_bell_inequality4}
    {n\sin(\frac{\pi}{2n})} \WWn \le \sqrt{2}\, \gamma(\chsh) \, \sin(\frac{\pi}{2}\TTn) \,,
\end{equation}
valid whenever 
\begin{equation} \label{eq: conc lin}
    (\ast\ast\ast\ast) \quad \quad \sqrt{2} \, \gamma(\chsh) \le 1/\cos(\frac{\pi}{2}\TTn)\,.
\end{equation} 
This region is denoted by $(\ast\ast\ast\ast)$ in Fig.~\ref{fig:conc}, and is strictly greater than the region of interest in the statement of the result in Eq.~\eqref{eq:main}. This implies ${\rm R }(\T_n,\cS)=\widehat{\rm R }(\T_n,\S)$ on the range of values given by Eq.~\eqref{eq: conc lin}. However, we will now see that the function $\rm R$ coincides with its concave envelope on a larger range of values.

\subsection{(ii) Properties of the concave envelope  $\widehat{\rm R}(t,s)$ }

\label{app: convex F}

In this appendix we discuss the properties the function ${\rm R }(t,s)=\sin(\frac{\pi}{2}{t})\frac{s + \sqrt{8 - 
    s^2}}{2\sqrt{2}}$  and its concave envelope $\widehat{\rm R }(t,s)$ on the domain $(t,s) \in [0,1]\times[2,2\sqrt 2]$. Precisely, we derive a necessary and sufficent condition for the equality ${\rm R }(t,s)=\widehat{\rm R }(t,s).$

It will be convenient to perform a change of variables $s = 2\sqrt 2(1 -\epsilon)$, such that $\epsilon\in\left[0,1-\frac{1}{\sqrt 2}\right]$ is small when the CHSH score is close to maximal value, and discuss the properties of the concave envelope $\widehat{ F}(\epsilon,t)$ of the function  
\begin{align}
F(\epsilon,t) &:= \frac{2}{\pi} {\rm R}\left(t, 2\sqrt 2(1 -\epsilon)\right) = f(\epsilon)g(t) \qquad \text{with}\\
f(\epsilon) &:= 
1 +\sqrt{2\epsilon-\epsilon^2} - \epsilon 
\\
g(t)&:=\frac{2}{\pi} 
    \sin(t \frac{\pi}{2})
\end{align}
on the domain  $(\epsilon,t)\in [0,\epsilon_{max}]\times [0,1]$ with $\epsilon_{max}:=1-\frac{1}{\sqrt 2}$. Linear change of variables and multiplication by a constant do not affect the concavity of a function, hence 
\be
\widehat{F}(\epsilon, t) = \frac{2}{\pi} \, \widehat{\rm R}\left(t, 2\sqrt 2(1 -\epsilon)\right).
\ee  

We now show that the function $F(\epsilon,t)$ coincides with its concave envelope  $\widehat{F}(\epsilon,t)$ on a region. The final result is graphically illustrated in Fig.~\ref{fig:conc}.
\begin{figure}
    \centering
    \includegraphics[width=0.5\linewidth]{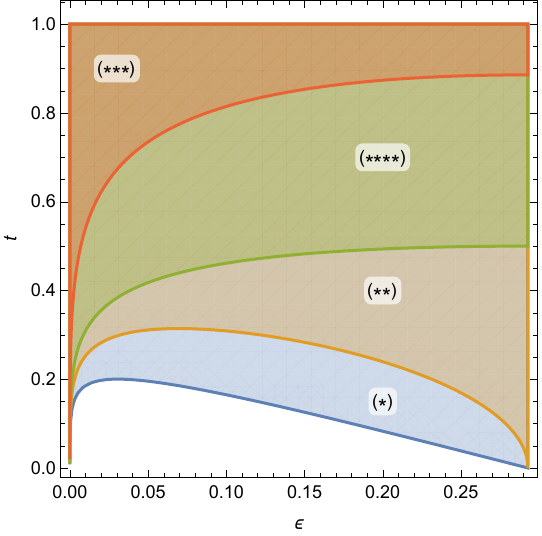}
    \caption{The shaded regions (above the corresponding lines) represent different conditions for the function $F(\epsilon,t) =\widehat{F}(\epsilon,t)$ to be equal to its concave envelope (recall that $t=\T$ and $\epsilon = 1-\frac{\S}{2\sqrt 2}$). The condition $(\ast)$ in Eq.~\eqref{eq: cond hess}, obtained by requiring that the Hessian of the function is negative semi-definite, is necessary but not sufficient. The condition  $(\ast\ast)$ in Eq.~\eqref{eq: cond hess}, obtained by constructing the tangent hyperplanes to the function, is shown to be necessary and sufficient in the sec.~\ref{app: convex F} devoted to this. The sufficient condition $(\ast\ast\ast\ast) \cos(\frac{\pi}{2}t ) \leq  (1-\epsilon +\sqrt{(2-\epsilon)\epsilon})^{-1}$ implied by Eq.~\eqref{eq: conc lin} is an immediate consequence of the routed Bell inequalities family~\eqref{eq app: lin Bell}, as shown in sec.~\ref{sec: line to full}. Finally, the simple condition $(\ast\ast\ast)$ in Eq.~\eqref{eq: cond main}, which is presented in the statement of the result in the main text and verified by our qubits example, is also formally shown to be sufficient in sec.~\ref{sec: simple to conv}.}
    \label{fig:conc}
\end{figure}

\subsubsection{Local concavity}

To start, we compute the Hessian of the function

\begin{equation}
{\bf H}_F(\epsilon,t) =   \left(
\begin{array}{cc}
 \partial_\epsilon^2  F(\epsilon,t) & \partial_\epsilon  \partial_t F(\epsilon,t) \\
\partial_t \partial_\epsilon F(\epsilon,t) & \partial_t^2  F(\epsilon,t) \\
\end{array}\right)
=  \left(
\begin{array}{cc}
 f''(\epsilon) g(t) & f'(\epsilon) g'(t) \\
f'(\epsilon) g'(t) & f(\epsilon) g''(t) \\
\end{array}
\right),
\end{equation}
with 
\begin{align}
    f'(\epsilon) &= \frac{1-\epsilon }{\sqrt{(2-\epsilon ) \epsilon }}-1 \qquad  f''(\epsilon) = \frac{1}{((2-\epsilon ) \epsilon )^{3/2}}\\
    g'(\epsilon)& = \cos \left(\frac{\pi  t}{2}\right) \qquad  g''(\epsilon) = -\frac{\pi}{2} \sin \left(\frac{\pi  t}{2}\right).
\end{align}
Here the trace of the Hessian is non-positive $ f''(\epsilon) g(t)+ f(\epsilon) g''(t)\leq 0$, therefore it must have at least one negative eigenvalue, or both are zero. The sign of the other eigenvalue can be determined from its determinant, and we find
\begin{equation}
(\ast) \quad  {\bf H}_F (\epsilon,t)\leq 0 \Longleftrightarrow \det \, {\bf H}_F(\epsilon,t)\geq 0.
\end{equation}
We know that the function $F(\epsilon,t)$ is locally concave iff its Hessian is negative semidefinite. Hence $(\ast)$ gives the first necessary condition for $F(\epsilon,t)= \widehat F(\epsilon,t)$.

\subsubsection{Tangent hyperplane}

Next, we consider the hyperplanes $H_{\epsilon_0,t_0}(\epsilon,t)$ tangent to the function $F(\epsilon,t)= f(t)g(t)$ at $(\epsilon_0,t_0)$. The hyperplane is given by the equation
\begin{align}
{\rm H}_{\epsilon_0,t_0}(\epsilon,t) &:= F(\epsilon_0,t_0) +  \partial_\epsilon F(\epsilon_0,t_0) (\epsilon-\epsilon_0)+  \partial_t F(\epsilon_0,t_0) (t-t_0)\\
&=f(\epsilon_0) g(t_0) +f'(\epsilon_0) g(t_0)  (\epsilon-\epsilon_0) +  f(\epsilon_0) g'(t_0) (t-t_0)
\end{align}
We know that if the hyperplane at $(\epsilon_0,t_0)$ upper-bounds the function $H_{\epsilon_0,t_0}(\epsilon,t) \geq F(\epsilon,t)$ on the whole domain $[0,\epsilon_{max}]\times[0,1]$, then the function is equal to its concave envelope when $\widehat{F}(\epsilon_0,t_0)=F(\epsilon_0,t_0)$. Therefore, we want to identify all values  $(\epsilon_0,t_0)$, such that the function
\be
\Delta_{\epsilon_0,t_0}(\epsilon,t) = H_{\epsilon_0,t_0}(\epsilon,t)-F(\epsilon,t).
\ee
is nonnegative. In other words, we want to determine all values $(\epsilon_0,t_0)$ such that the global minimum of $\Delta_{\epsilon_0,t_0}(\epsilon,t)$ on $[0,\epsilon_{max}]$ is nonnegative. To answer this question it will useful to manipulate the derivatives of the function
\begin{align}
    \partial_\epsilon \Delta_{\epsilon_0,t_0}(\epsilon,t) &= f'(\epsilon_0)g(t_0) - f'(\epsilon)g(t)\\
     \partial_t \Delta_{\epsilon_0,t_0}(\epsilon,t) &= f(\epsilon_0)g'(t_0) - f(\epsilon)g'(t).
\end{align}
We also remark that the Hessian of the function is given by $ {\bf H}_\Delta (\epsilon,t) =  -{\bf H}_F (\epsilon,t) $ and is independent of $(\epsilon_0,t_0)$.

\subsubsection{The values of $\Delta_{\epsilon_0,t_0}$ on the boundary of the domain.}

We first look for the global minimum of $\Delta_{\epsilon_0,t_0}$ on the boundary of the domain. Noting that $g(0)=0$, $g'(1)=0$, $f'(\epsilon_{max})=0$ and $\lim_{\epsilon\to 0_+, t>0}  f'(\epsilon) = -\infty$, we conclude the  following properties 
\begin{align}
    &\partial_\epsilon \Delta_{\epsilon_0,t_0}(0,t) \to -\infty \qquad \text{if} \qquad t>0\\
    &\partial_\epsilon \Delta_{\epsilon_0,t_0}(\epsilon_{max},t) = f'(\epsilon_0)g(t_0) \geq 0 \\
    &\partial_t \Delta_{\epsilon_0,t_0}(\epsilon,1) = f(\epsilon_0)g'(t_0) \geq 0. 
\end{align}
These inequalities imply that the the function can not have its global minimum at $\epsilon =\epsilon_{max}$, $t=1$ and $\epsilon=0$ if $t>0$, because it decreases in the direction of the interior. In turn, for $t=0$ we find that
 \begin{align}
 \partial_\epsilon \Delta_{\epsilon_0,t_0}(\epsilon,0) = f'(\epsilon_0)g(t_0) \geq 0,
 \end{align}
 i.e.  $\Delta_{\epsilon_0,t_0}(\epsilon,0)$ is non-decreasing with $\epsilon$ and attains its mininum on this interval at $\epsilon =0$.
 
 Hence the on the boundary of the domain the function $\Delta_{\epsilon_0,t_0}$ can only admit a global minimum at $(\epsilon,t)=(0,0)$. For our task we must thus verify that the function is nonnegative here. This adds the following constraint
 \begin{align}
     (\ast\ast) \qquad \Delta_{\epsilon_0,t_0}(0,0)\geq 0.
 \end{align}

\subsubsection{Local minima of $\Delta_{\epsilon_0,t_0}$}

Now that we settled the question of the boundary it remains to verify that the local minima of the function are also positive. Its local extremas are given by the equations $\partial_\epsilon \Delta_{\epsilon_0,t_0}(\epsilon,t) = \partial_t\Delta_{\epsilon_0,t_0}(\epsilon,t)=0$, whose explicit form is
\begin{align}\label{eq: diff e =0}
    \sin(\frac{\pi}{2}t) f'(\epsilon) &=  \sin(\frac{\pi}{2}t_0) f'(\epsilon_0) \quad\Longleftrightarrow\quad  \sin(\frac{\pi}{2}t)  =  \sin(\frac{\pi}{2}t_0) \frac{f'(\epsilon_0)}{f'(\epsilon)} \\
    \label{eq: diff t =0}
    \cos(\frac{\pi}{2}t) f(\epsilon) &=  \cos(\frac{\pi}{2}t_0) f(\epsilon_0) \quad \Longleftrightarrow \quad \cos(\frac{\pi}{2}t) =  \cos(\frac{\pi}{2}t_0) \frac{f(\epsilon_0)}{f(\epsilon) }
\end{align}
where one notes that the functions of the rhs are always nonnegative on the domain of interest, as  $f(\epsilon),f'(\epsilon)$, sine and cosine are nonnegative. In fact, we already know one local minimum of the function, which is given by $(\epsilon,t)=(\epsilon_0,t_0)$ where $ \Delta_{\epsilon_0,t_0}(\epsilon_0,t_0)=0$ by construction.

Let us now show that the function can at most admit two local extrema. To see this we eliminate the variable $t$ by using $\cos^2(\frac{\pi}{2}t)+ \sin^2(\frac{\pi}{2}t) =1$. This leaves us with the following equation
\begin{align}\label{eq: cons e comb}
 \frac{ \sin^2\left(\frac{\pi}{2}t_0\right)(f'(\epsilon_0))^2}{\big(f'(\epsilon)\big)^2} +  \frac{\cos^2\left(\frac{\pi}{2}t_0\right) \big(f(\epsilon_0)\big)^2}{\big( f(\epsilon) \big)^2 } = 1
\end{align}
Here, one can see that the function on the left hand side is convex, since both functions $1/\big( f'(\epsilon)\big)^2$  and $1/\big(f(\epsilon)\big)^2$ are convex, which can be directly seen from their second derivatives 
\begin{align}
    \frac{\dd^2}{\dd \epsilon^2} \frac{1}{\big( f'(\epsilon)\big)^2} &=\frac{6 \left(1-\epsilon +\sqrt{(2-\epsilon ) \epsilon }\right)}{\sqrt{(2-\epsilon ) \epsilon } \left(1 +\sqrt{(2-\epsilon ) \epsilon }-\epsilon\right)^4} = \frac{6 \,f(\epsilon)}{\sqrt{(2-\epsilon ) \epsilon } \left(\epsilon +\sqrt{(2-\epsilon ) \epsilon }-1\right)^4}\geq 0\\
    \frac{\dd^2}{\dd \epsilon^2} \frac{1}{\big( f(\epsilon)\big)^2} &= \frac{8 \sqrt{(2-\epsilon ) \epsilon }- (12 \epsilon ^3-36 \epsilon^2 +26 \epsilon -2)}{((2-\epsilon ) \epsilon )^{3/2} \left(\epsilon +\sqrt{(2-\epsilon ) \epsilon }-1\right)^4}\geq 0.
\end{align}
In both cases the denominator is nonnegative, and the sign is determined by the numerator. In the first case we know that $f(\epsilon) \geq 0$. In the second case we see numerically that the numerator 
$8 \sqrt{(2-\epsilon ) \epsilon }- (12 \epsilon ^3-36 \epsilon^2 +26 \epsilon -2)$ is lower bounded by $2$ on $[0,\epsilon_{max}]$, from there its positivity can be guaranteed by e.g. noting that  $-(12 \epsilon ^3-36 \epsilon^2 +26 \epsilon -2)^2\geq 0$ for $\epsilon \leq 1 - \sqrt{5/6}$ while the whole function is Lipschitz continuous for $\epsilon \geq 1 - \sqrt{5/6}$. Since the function in the left hand side of Eq.~\eqref{eq: cons e comb} is convex in $\epsilon$ and non-constant, it follows that the equation can have at most two solutions for $\epsilon$. In other words, the local extrema of  $\Delta_{\epsilon_0,t_0}(\epsilon,t)$ might occur for at most two distinct values of $\epsilon$. 

Let us now come back to the Eqs.~(\ref{eq: diff e =0},\ref{eq: diff t =0}). By inverting the trigonometric functions we find the curves on which the derivatives of $\Delta_{\epsilon_0,t_0}$ vanish
\begin{align}
  \partial_\epsilon \Delta_{\epsilon_0,t_0}(\epsilon,t)=0 \quad \Longleftrightarrow\quad
   t  &=  X_{\epsilon_0,t_0}(\epsilon):=\frac{2}{\pi}\arcsin\left[\frac{\sin(\frac{\pi}{2}t_0)  f'(\epsilon_0)}{f'(\epsilon)}\right] \\
 \partial_t\Delta_{\epsilon_0,t_0}(\epsilon,t) =0 \quad \Longleftrightarrow\quad
    t &= Y_{\epsilon_0,t_0}(\epsilon) := \frac{2}{\pi} \arccos \left[  \frac{ \cos(\frac{\pi}{2}t_0) f(\epsilon_0)}{f(\epsilon) } \right].
\end{align}
  Since the arguments of arcsin and arccos are positive, both functions $0\leq X_{\epsilon_0,t_0}, Y_{\epsilon_0,t_0}\leq 1$ produce valuess in the valid range if the arguments do not exceed one, otherwise the corresponding derivative does not vanish for the chosen value of  $\epsilon$. The local minima are obtained at the the intersection of these two functions in the domain $\epsilon \in[0,\epsilon_{max}]$. It is easy to see that both functions are monotonically increasing in $\epsilon$, since we know that $\arcsin(x)$ with $1/f'(\epsilon)$ are both monotonically increasing, and  $\arccos(x)$ with $1/f(\epsilon)$ are both monotonically decreasing. Local extrema are thus given by the equation $ X_{\epsilon_0,t_0}(\epsilon) = Y_{\epsilon_0,t_0}(\epsilon)$, which can have at most two solution as we have shown.

By construction we know that $(\epsilon,t)=(\epsilon_0,t_0)$ is a local minimum of the function, since at this point it has a positive semi-definite Hessian ${\bf H}_\Delta (\epsilon,t) =- {\bf H}_F (\epsilon,t) \geq 0$ as implied by the condition $(\ast)$. If there is another local extremum, it can be obtained by decreasing or increasing $\epsilon$ while following the curves  $\bm x_\epsilon =\binom{\epsilon}{X_{\epsilon_0,t_0}(\epsilon)}$ and $\bm y_\epsilon =\binom{\epsilon}{Y_{\epsilon_0,t_0}(\epsilon)}$ until they intersect again. 
Where we have used the fact that the curves are connected, which is guaranteed by the fact that $X_{\epsilon_0,t_0}(\epsilon)$ and  $Y_{\epsilon_0,t_0}(\epsilon)$ are monotonically increasing. Now, consider following the curve $\bm y_\epsilon$, on which the gradient of the function with respect to $t$ vanishes $ \partial_t\Delta_{\epsilon_0,t_0}(\epsilon,t) =0$.
Because of this, the derivative of the function $\Delta_{\epsilon_0,t_0}(\bm y_\epsilon)$ along the curve is simply given by its gradient with respect to $\epsilon$
\begin{align}
    \dd \Delta_{\epsilon_0,t_0}(\bm y_\epsilon) = \binom{\partial_\epsilon\Delta_{\epsilon_0,t_0}(\bm y_\epsilon)}{\partial_t\Delta_{\epsilon_0,t_0}(\bm y_\epsilon)}\cdot \dd \bm x_\epsilon  =
     \binom{\partial_\epsilon\Delta_{\epsilon_0,t_0}(\bm y_\epsilon)}{0}\cdot \binom{\dd \epsilon}{\dd Y_{\epsilon_0,t_0}(\epsilon)} = \partial_\epsilon\Delta_{\epsilon_0,t_0}(\bm y_\epsilon) \dd \epsilon.
\end{align}
At $\epsilon =\epsilon_0$ we know that $\partial_\epsilon\Delta_{\epsilon_0,t_0}(\bm y_\epsilon)  <0$ for $\epsilon= \epsilon_0 -\dd \epsilon$  and $\partial_\epsilon\Delta_{\epsilon_0,t_0}(\bm y_\epsilon)  > 0$ for $\epsilon= \epsilon_0 +\dd \epsilon$ since this is an isolated local minimum. By following the curve to the other local extremum $(\epsilon_*,t_*)$, where $\partial_\epsilon\Delta_{\epsilon_0,t_0}(\bm y_{\epsilon_*})=0$, by the intermediate value theorem we find that 
\begin{align}
   \partial_\epsilon\Delta_{\epsilon_0,t_0}(\bm y_{\epsilon_* +\dd \epsilon}) <0 \quad \text{if}\quad  \epsilon_* < \epsilon_0 \\
    \partial_\epsilon\Delta_{\epsilon_0,t_0}(\bm y_{\epsilon_* -\dd \epsilon}) >0 \quad  \text{if}\quad  \epsilon_* > \epsilon_0
\end{align}
 In both cases the bounds guarantee that $(\epsilon_*,t_*)$ can not be a local minimum of the function, since the function increases when approaching it from $(\epsilon_0,t_0)$ along $\bf y_\epsilon.$ We conclude that on the domain of interest the function $\Delta_{\epsilon_0,t_0}$  has a unique local minimum at $(\epsilon_0,t_0)$.

 \subsubsection{The condition $(\ast \ast)$ implies $(\ast)$}

 Under the condition $(\ast)$, we have thus shown that the global minimum of $\Delta_{\epsilon_0,t_0}(\epsilon,t)$ on the domain $[0,\epsilon_{max}]\times[0,1]$ is to be found either at $(\epsilon_0,t_0)$, where $\Delta_{\epsilon_0,t_0}(\epsilon_0,t_0)=0$ by construction, or at the origin $(0,0)$ where $\Delta_{\epsilon_0,t_0}(0,0)\geq0$ gave rise to the condition $(\ast\ast)$. Therefore, the hyperplane tangent to $F(\epsilon,t)$ at $(\epsilon_0,t_0)$ is an upper bound on the function on the whole domain if and only if both conditions $(\ast)$ and $(\ast \ast)$ are fulfilled. In turn, this is equivalent to the function being equal to its concave envelop at the tangent point $F(\epsilon_0,t_0) = \widehat F(\epsilon_0,t_0)$. To summarize, we have shown that
 \begin{align}
     F(\epsilon_0,t_0) = \widehat F(\epsilon_0,t_0) \quad \Longleftrightarrow \quad \begin{cases} \det \, {\bf H}_F(\epsilon_0,t_0)\geq 0 & (\ast) \\
     \Delta_{\epsilon_0,t_0}(0,0)\geq 0 & (\ast \ast)
     \end{cases}.
 \end{align}

We now would like to put these constraints in a more usable form. We have
\begin{align}
 (\ast)\quad     \det \, {\bf H}_F(\epsilon,t) \geq 0 \quad  &\Longleftrightarrow \quad  f''(\epsilon)f(\epsilon)  g(t)  g''(t) - (f'(\epsilon) g'(t))^2  \geq 0 \\
(\ast \ast) \quad     \Delta_{\epsilon,t}(0,0)\geq 0   \quad  &\Longleftrightarrow \quad  
     f(\epsilon) g(t) - f'(\epsilon) g(t) \epsilon -  f(\epsilon) g'(t) t \geq 0.
\end{align}
Direct calculation allows one to write these bounds as 
\begin{align}\label{eq: cond hess}
(\ast) \quad \cos (\pi  t) &\leq \frac{1+ \epsilon  (3-2 (3-\epsilon ) \epsilon )}{1+2 \sqrt{(2-\epsilon ) \epsilon }-\epsilon  (5-2 (3-\epsilon ) \epsilon )}\\
\label{eq: iff}
(\ast \ast)\quad \frac{\tan \left(t\frac{\pi }{2}\right)}{t \frac{\pi }{2}} &\geq \frac{2 +\sqrt{(2-\epsilon ) \epsilon } -\epsilon  (5-2 \epsilon ) }{2 (1-\epsilon )}
\end{align}
Using $\cos(2x)=\frac{1-\tan^2(x)}{1+\tan^2(x)}$ we can re-express the first constraint as
\begin{align}
 (\ast)\quad  
 1+ \tan^2 \left(t\frac{\pi }{2}\right) &\geq B_\ast(\epsilon) :=1+\frac{\sqrt{(2-\epsilon ) \epsilon }-2 (2-\epsilon ) (1-\epsilon ) \epsilon }{1 +\sqrt{(2-\epsilon ) \epsilon }-\epsilon} \\
 (\ast \ast)\quad \left(\frac{\tan \left(t\frac{\pi }{2}\right)}{t \frac{\pi }{2}} \right)^2 &\geq B_{\ast\ast}(\epsilon) :=\left(\frac{2 +\sqrt{(2-\epsilon ) \epsilon } -\epsilon  (5-2 \epsilon ) }{2 (1-\epsilon )}\right)^2
\end{align}
Comparing the left hand sides of these equation we find that
\begin{align}
 1+ \tan^2 \left(t\frac{\pi }{2}\right) &\geq \left(\frac{\tan \left(t\frac{\pi }{2}\right)}{t \frac{\pi }{2}} \right)^2 \Longleftrightarrow \left(t \frac{\pi }{2}\right)^2 \geq \frac{\tan^2 \left(t\frac{\pi }{2}\right)}{1+ \tan^2 \left(t\frac{\pi }{2}\right)} = \sin^2 \left(t\frac{\pi }{2}\right).
\end{align}
In turn, let us also show that 
\begin{align}
 B_{\ast \ast}(\epsilon)\geq B_{\ast }(\epsilon)
\end{align}
Introduce the variable change $1-\epsilon = \cos(\alpha)$ and $\sqrt{(2-\epsilon ) \epsilon } = \sin(\alpha)$ with $\alpha \in [0,\frac{\pi}{4}]$, and rewrite
\begin{align}
 B_{\ast \ast}(\epsilon)- B_{\ast }(\epsilon) = \frac{\sin ^2\left(\frac{\alpha }{2}\right) (-5 \sin (\alpha )+\sin (3 \alpha )+5 \cos (\alpha )+8 \cos (2 \alpha )+\cos (3 \alpha ))}{2 \sin (\alpha )+\sin (2 \alpha )+2 \cos (\alpha )-\cos (2 \alpha )+1}
\end{align}
Here, the denominator is nonnegative since  $-\cos (2 \alpha )+1\geq 0$ and the rest of the terms are nonnegative, the numerator is nonnegative since $5 \cos (\alpha )-5 \sin (\alpha )\geq 0$ on $\alpha \in [0,\frac{\pi}{4}]$ and the rest of the terms are nonnegative. It follows that the constraint $(\ast \ast)$ guarantees $(\ast)$ via 
\begin{align}
    1+ \tan^2 \left(t\frac{\pi }{2}\right) \geq \left(\frac{\tan \left(t\frac{\pi }{2}\right)}{t \frac{\pi }{2}} \right)^2 \underset{(\ast \ast)}{\geq}B_{\ast \ast}(\epsilon) \geq B_{\ast}(\epsilon) .
\end{align}

Hence, we demonstrated the desired result

\begin{align}
     F(\epsilon,t) = \widehat F(\epsilon,t) \quad \Longleftrightarrow \quad \frac{\tan \left(t\frac{\pi }{2}\right)}{t \frac{\pi }{2}} &\geq \frac{2 +\sqrt{(2-\epsilon ) \epsilon } -\epsilon  (5-2 \epsilon ) }{2 (1-\epsilon )}.
 \end{align}

 \subsubsection{The condition $(\ast\ast\ast)$ implies $(\ast\ast)$}

\label{sec: simple to conv}
To complete the proof of the assertion $(ii)$ of the Result~\ref{result app :main} we need to show that the inequality ${\rm F}\left(x,t\right)\leq t$ implies ${\rm F}\left(x,t\right)=\widehat{\rm F}\left(x,t\right)$. First let us rewite this condition in terms of the variable $\epsilon=1-x$ we get
\begin{equation}\label{eq: cond main}
(\ast\ast\ast)\qquad  F\left(\epsilon,t\right)\leq t \quad \Longleftrightarrow \quad \frac{g(t)}{t}\leq \frac{1}{f(\epsilon)} \quad \Longleftrightarrow \quad \frac{\sin(t\frac{\pi}{2})}{t\frac{\pi}{2}}\leq \frac{1}{1 +\sqrt{2\epsilon-\epsilon^2} - \epsilon}.
\end{equation}
We now  show that this  inequality implies 
\begin{equation}
(\ast \ast\ast)\qquad \implies \qquad(\ast \ast) \qquad \frac{\tan \left(t\frac{\pi }{2}\right)}{t \frac{\pi }{2}} \geq \frac{2 +\sqrt{(2-\epsilon ) \epsilon } -\epsilon  (5-2 \epsilon ) }{2 (1-\epsilon )}.
\end{equation}
First, rewrite the these conditions as
\begin{align}
    &(\ast\ast\ast) \quad 1- \frac{\sin(t\frac{\pi}{2})}{t\frac{\pi}{2}}\geq 1-\frac{1}{1 +\sqrt{2\epsilon-\epsilon^2} - \epsilon} \\
    &(\ast \ast) \qquad \frac{\tan \left(t\frac{\pi }{2}\right)}{t \frac{\pi }{2}} -1\geq \frac{2 +\sqrt{(2-\epsilon ) \epsilon } -\epsilon  (5-2 \epsilon ) }{2 (1-\epsilon )}-1.
\end{align}
The implication $(\ast \ast\ast)\implies(\ast \ast)$ can therefore be guaranteed by the following chain of inequalities
\begin{equation}
    \frac{\tan \left(t\frac{\pi }{2}\right)}{t \frac{\pi }{2}} -1 \underset{(1)}{\geq}1- \frac{\sin(t\frac{\pi}{2})}{t\frac{\pi}{2}}\underset{(\ast \ast\ast)}{\geq} 1-\frac{1}{1 +\sqrt{2\epsilon-\epsilon^2} - \epsilon} \underset{(2)}{\geq}   \frac{2 +\sqrt{(2-\epsilon ) \epsilon } -\epsilon  (5-2 \epsilon ) }{2 (1-\epsilon )}-1,
\end{equation}
and it remains to demonstrate (1) and (2). 

For the first inequality we have
\begin{align}
    (1) \quad \Longleftrightarrow \quad  \tan \left(x \right)+\sin(x) \geq 2x \quad \text{for} \quad x\in[0,\pi/2].
\end{align}
Here, the functions on both sides of the inequality are equal to zero at $x=0$, and the inequality can be implied by showing that the derivatives are ordered on the whole interval. That is,
\begin{equation}
    \frac{\dd}{\dd x} (\tan \left(x \right)+\sin(x)) = \frac{1}{\cos^2(x)}+\cos(x)\geq 2 \quad \Longleftrightarrow \quad 1 + z^3 - 2 z^2 \geq 0 \quad \text{for} \quad z\in[0,1].
 \end{equation}
The last inequality can be verified by noting that $1 + z^3 - 2 z^2$ equals to zero at $z=1$ and is a non-increasing function of $z\in[0,1]$.

For the second inequality we have
\begin{align}
(2)\quad    1-\frac{1}{1 +\sqrt{2\epsilon-\epsilon^2} - \epsilon} \geq  \frac{2 +\sqrt{(2-\epsilon ) \epsilon } -\epsilon  (5-2 \epsilon ) }{2 (1-\epsilon )}-1  
\quad \Longleftrightarrow \quad \frac{(1+2\epsilon -\epsilon^2) \left(\sqrt{(2-\epsilon ) \epsilon }-\epsilon \right)}{2\, (1-\epsilon ) \left(\sqrt{(2-\epsilon ) \epsilon }+1-\epsilon\right)}\geq 0.
\end{align}
Here, one can simply verify that all parenthesis contain terms which are nonnegative for $\epsilon\in[0,\epsilon_{max}]$. This concludes the proofs of the assertions $(i-iii)$ of the Result~\ref{result app :main}. 

\subsection{(iv) Tightness of the bounds   in the large $n$ limit.}
\label{app: tight RBT}

 We now construct an SRQ model saturating the bound of Eq.~\eqref{app: main} in the $n\to \infty$ limit. To do se let the source distribute the maximally entangled two qubit state $\ket{\Phi^+}=\frac{1}{\sqrt 2} (\ket{00}+\ket{11})$ and Alice perform the two measurements corresponding to the observables 
 \be
 A_x^\alpha =\left(\cos(\alpha) H +(-1)^x \sin(\alpha) M \right)
\ee
with $H=\frac{Z+H}{\sqrt 2}$, $M=\frac{Z-H}{\sqrt 2}$
and  some angle $\alpha \in [0,\frac{\pi}{4}]$. With this, let \BobS\, perform the the measurement maximizing the CHSH score in the SP, we have seen (e.g. in the derivation of Lemma \ref{lemma: CHSH}) that this results in the following score 
\be
\S =  2\sqrt{2} \left(\frac{\cos(\alpha)+\sin(\alpha)}{\sqrt 2}\right)= 2\sqrt{2}\,  C_\alpha .
\ee


To complete the SRQ model, in the LP after the router let Bob's system be measured by the  two-outcomes PVM measurement $\{ {\rm E}_\lambda\}$ in the eigenbasis of $H $. Doing so, and observing an output $\lambda=\pm 1$ with equal probability, collapses Alice's reduced state to 
\be
\varrho_{\pm}= \frac{1}{2}\left( \id \pm H \right).
\ee
A copy of the value $\lambda$ is sent to the measurement device of \BobL\, (manifestly there is no entanglement shared between Alice and \BobL) which is instructed to implement the deterministic response function $p(b|\lambda,\theta)\in \{0,1\}$. This is an LHS model, which conditional one \BobL's measurement output $b$ leaves Alice's qubit in the state
\be
\rho_{b|\theta} = \frac{1}{2} \Big( p(b|+,\theta)\, \varrho_+  +  p(b|-,\theta) \, \varrho_-\Big) = \frac{p(b|+,\theta) +p(b|-,\theta)}{4} \id + \frac{p(b|+,\theta) -p(b|-,\theta)}{4} H.
\ee
We now assume that the response function has the symmetry $p(0|\pm,\theta)=p(1|\mp,\theta)$, which allows us to write
\begin{align}
\rho_{0|\theta} &= \frac{p(0|+,\theta) +p(1|+,\theta)}{4} \id + \frac{p(0|+,\theta) -p(1|+,\theta)}{4} H
\\
\rho_{1|\theta} &= \frac{p(0|+,\theta) +p(1|+,\theta)}{4} \id - \frac{p(0|+,\theta) -p(1|+,\theta)}{4} H.
\end{align}

Accordingly, in the continuous setting limit $n\to \infty$ for our SRQ model the LP quantities are given by 
\begin{align}
\T&= \int_0^\pi \frac{\dd \theta}{\pi} \tr \left[\rho_{0|\theta}+\rho_{1|\theta}\right] \\
&= \int_0^\pi \frac{\dd \theta}{\pi} \big( p(0|+,\theta) +p(1|+,\theta)\big)\\
\C & = \int_0^\pi \frac{\dd \theta}{\pi} \tr \left[ (\cos(\theta) A_0 +\sin(\theta) A_1)(\rho_{0|\theta}-\rho_{1|\theta})\right] \\
& = \int_0^\pi \frac{\dd \theta}{\pi} \cos(\alpha) (\cos(\theta)+\sin(\theta)) \tr \left[  H (\rho_{0|\theta}-\rho_{1|\theta})\right]\\
& = \sqrt{2 }\cos(\alpha) \int_0^\pi \frac{\dd \theta}{\pi}\cos \left(\theta -\frac{\pi }{4}\right) \big( p(0|+,\theta) -p(1|+,\theta)\big),
\end{align}
where we used $\cos(\theta)+\sin(\theta) = \sqrt{2} \cos \left(\theta -\frac{\pi }{4}\right)$. Let us now chose a specific deterministic response function given by
\be
\Big(p(0|+,\theta),p(1|+,\theta),p(\varnothing|+,\theta)\Big) =\begin{cases}
(1,0,0) & \left|\theta-\frac{\pi}{4}\right|\leq \Omega \\
(0,1,0) & \left|\theta-\frac{\pi}{4}+\pi\right|\leq \Omega \\
(0,0,1) & \text{otherwise}
\end{cases}
\ee
for a parameter $\Omega\in\left[0,\frac{\pi}{2}\right]$. Straightforward integration gives
\begin{equation}
 \T = \frac{\Omega}{\frac{\pi}{2}} \qquad \text{and} \qquad \C = \sqrt{2 }\cos(\alpha) \frac{\sin(\Omega)}{\frac{\pi}{2}} = (C_\alpha + \sqrt{1-C_\alpha^2} )\frac{\sin(\Omega)}{\frac{\pi}{2}}.
\end{equation}

We have therefore shown that by varying the parameters $\alpha$ and $\Omega$ one can construct SRQ models that achieve the values
\be
(\S,\T,\C) = \left(\S, \T, \frac{\S +\sqrt{8-\S^2}}{2\sqrt{2}} \frac{2}{\pi}\sin(\T \frac{\pi}{2})\right)
\ee
for all $\T\in[0,1]$ and $\S \in[2,2\sqrt 2]$. Hence, in the limit on continuous settings $n\to \infty$, for all $\T=\T_\infty$ and $\S$ we have constructed an SRQ model satisfying
\be
\frac{\pi}{2} \cW_\infty = \frac{\pi}{2} \cW = \frac{\S +\sqrt{8-\S^2}}{2\sqrt{2}}  \sin(\T \frac{\pi}{2}) = {\rm R}(\T_\infty,\S) 
\ee
That is, an SRQ model saturating the bound
\be
n \sin(\frac{\pi}{2 n}) \cW_n \leq {\rm R}(\T_n,\S).
\ee
in the $n\to \infty$ limit where $n \sin(\frac{\pi}{2 n})\to \frac{\pi}{2}$.
By definition of the concave envelope, mixing such SRQ strategies allows one to saturate our routed Bell inequality
\be
n \sin(\frac{\pi}{2 n}) \cW_n \leq \widehat {\rm R}(\T_n,\S).
\ee
in the same limit; hence the bound it tight.

\section{Proof that the functions $h_\beta(x)$ in Eq.~\eqref{appeq:h} are nonincreasing and concave}\label{app:concavity}
Here we show that the family of functions
\begin{equation}
    h_{\beta}(x) := \sqrt{ f_{\beta}(x)^2-1}\,\, - \cos^{-1}({1}/{f_{\beta}(x)})\,,
\end{equation}
defined on the domain $x \in [1/\sqrt{2},1]$ are nonincreasing and concave, i.e., $h_{\beta}'(x) \le 0$ and $h_{\beta}''(x) \le 0$ whenever $\beta \in \,]0,1]$. The function $f_{\beta}(x)$ is given by
\begin{equation}
    f_{\beta}(x) := \frac{x + \sqrt{1-x^2}}{\beta}\,.
\end{equation}
Computing the first and second derivatives of $h_{\beta}(x)$, we find
\begin{align}
    h_{\beta}'(x) &= f_{ \beta}'(x) \frac{\sqrt{{ f}^2_{ \beta}(x)-1}}{{f}_{ \beta}(x)}\,,\qquad
    h_{\beta}''(x)= { f}_{ \beta}''(x) \frac{\sqrt{{f}^2_{\beta}(x)-1}}{{ f}_{ \beta}(x)}  +
    ({ f}_{ \beta}'(x) )^2 \frac{1}{{ f}^2_{ \beta}(x) \sqrt{{ f}^2_{ \beta}(x)-1}}\,,
\end{align}
where,
\begin{equation}
    { f}_{ \beta}'(x) = \frac{1}{\beta} \left(1 - \frac{x}{\sqrt{1-x^2}}\right) \le 0\,,\quad
    { f}_{ \beta}''(x) = \frac{1}{\beta} \left(\frac{-1}{(1-x^2)^{3/2}} \right) \le 0\,.
\end{equation}
The fact that $h_{ \beta}(x)$ is nonincreasing follows from ${f}_{ \beta}'(x) \leq 0$.  In turn,
$h_{\beta}(x)$ is concave if 
\be
h_{\beta}''(x) \leq 0 \quad \Longleftrightarrow \quad { f}_{ \beta}''(x) \,{ f}_{\beta}(x)\,({ f}^2_{ \beta}(x)-1) + ({ f}_{ \beta}'(x) )^2\leq 0 
\ee
Straightforward algebra gives
\begin{align}
    { f}_{ \beta}''(x) \,{ f}_{\beta}(x)\,({ f}^2_{ \beta}(x)-1) + ({ f}_{ \beta}'(x) )^2 =   -\frac{\beta^2 (x-x^3) + 3 \left( x-x^3\left(\frac{2+\beta^2}{3}\right)\right) +\left(2 (x^2-\beta^2)+1\right) \sqrt{1-x^2} }{\beta ^4 \left(1-x^2\right)^{3/2}},
\end{align}
which is indeed nonpositive since all terms in the numerator and denominator are nonnegative, implying  $h_{\beta}''(x)\leq 0$.

\end{document}